\documentclass[12pt]{article}

\usepackage[utf8]{inputenc}
\usepackage{amsmath}
\usepackage{amsfonts}
\usepackage{amsthm}
\usepackage{color}
\usepackage{graphicx}
\usepackage[normalem]{ulem}
\definecolor{luc}{RGB}{126, 51, 0}

\oddsidemargin=0cm  % 1 inch less than distance from left edge of right-hand p.
\evensidemargin=0cm  % same for left-hand pages.
\textwidth=6.5in    % Width of text on page.
\textheight=8.5in   % Height of text (excluding head and foot).
\topmargin=-.3in   % Extra space added to top of page.

% theorems, remarks, etc.
  %% types italiques
        \theoremstyle{plain}
        
        \newtheorem{proposition}{Proposition}[section]

  %% types roman
        \theoremstyle{remark}
        \newtheorem{remark}{\bf Remark}[section]
        \theoremstyle{remark}
        
        \theoremstyle{remark}

	 %\newtheorem{appendix}{\bf Definition}[section]

%------------------------------------------------------------------------
%--------------------------- personnal macros ---------------------------
%------------------------------------------------------------------------

%----- typos

%----- fractions
\newcommand{\demi}{\frac{1}{2}}

%----- set
\newcommand{\R}{{\mathbb{R}}}

%\newcommand{\N}{{\mathbb{N}}}

%----- derivatives
\newcommand{\dt}{\partial_t}

\newcommand{\cint}[1]{\langle #1 \rangle}

%----- symbols
\newcommand{\eps}{\varepsilon}
\newcommand{\Kn}{{\rm Kn}}
\newcommand{\C}{{\cal C}}
\newcommand{\Id}{{\it Id}}

\newcommand{\correct}[1]{#1}
\newcommand{\correctp}[1]{#1}

%------------------------------------------------------------------------
%----------------------- end of personnal macros ------------------------
%------------------------------------------------------------------------

\begin{document}

\begin{center}

{\bf A BGK model for high temperature rarefied gas flows}

\vspace{1cm}
C. Baranger$^1$, Y. Dauvois$^1$, G. Marois$^1$, J. Math\'e$^1$, J. Mathiaud$^{1,2}$, L. Mieussens$^2$

\bigskip
$^1$CEA-CESTA\\
15 avenue des Sabli\`eres - CS 60001\\
33116 Le Barp Cedex, France\\
{ \tt(celine.baranger@cea.fr, julien.mathiaud@u-bordeaux.fr)}\\

\bigskip
$^2$Univ. Bordeaux, CNRS, Bordeaux INP, IMB, UMR 5251, F-33400, Talence, France\\

{ \tt(Luc.Mieussens@math.u-bordeaux.fr)}

\end{center}

\paragraph{Abstract:} 
High temperature gases, for instance in hypersonic reentry flows, show
complex phenomena like excitation of rotational and vibrational energy
modes, and even chemical reactions. For flows in the continuous
regime, simulation codes use analytic or tabulated constitutive laws
for pressure and temperature. In this paper, we propose a BGK model
which is consistent with any arbitrary constitutive laws, and which is
designed to make high temperature gas flow simulations in the rarefied
regime. A Chapman-Enskog analysis gives the corresponding transport
coefficients. Our approach is illustrated by a numerical comparison
with a compressible Navier-Stokes solver with rotational and
vibrational non equilibrium. The BGK approach gives a deterministic
solver with a computational cost which is close to that of a simple
monoatomic gas.

\bigskip

\paragraph{Keywords:}
rarefied gas dynamics, polyatomic gas, BGK model, real gas effect, {second principle}

\tableofcontents

%--------------------------------------------------------------------------
\section{Introduction}
%--------------------------------------------------------------------------

For atmospheric reentry of space vehicles, it is important to estimate
the heat flux at the solid wall of the vehicle. In such hypersonic
flows, the temperature is very large, and the air flow, which is a
mixture of monoatomic and polyatomic gases, is modified by chemical
reactions. The characteristics of the mixture (viscosity and
specific heats) then depend on its temperature (see~\cite{nbk86}).

One way to take into account this variability is to use appropriate
constitutive laws for the air.  For instance, quantum mechanics allows
to derive a relation between internal energy and temperature that
accounts for activation of vibrational modes of the molecules (see~\cite{anderson}). When
the temperature is larger, chemical reactions occur, and if the flow
is in chemical equilibrium, empirically tabulated laws can be used to
compute all the thermodynamical quantities (pressure, entropy,
temperature, specific heats) in terms of density and internal energy, 
like the one
given in~\cite{anderson,hansen}. These laws give a closure of the
compressible Navier-Stokes equations, that are used for simulations
in the continuous regime, at moderate to low altitudes (see, for
example,~\cite{mykv88}).

In high altitude, the flow is in the rarefied or transitional regime,
and it is described by the
Boltzmann equation of Rarefied Gas Dynamics, also called the Wang-Chang-Uhlenbeck equation in
case of a reacting mixture. This equation is much too complex to be
solved by deterministic methods, and such flows are generally
simulated by the DSMC method~\cite{BS_2017}.  However, it is
attractive to derive simplified kinetic models that account for high
temperature effects, in order to obtain alternative and deterministic
solvers{: for such computations, it is necessary to capture dense zones with high temperatures and very rarefied zones with low temperatures}. Up to our knowledge, the first attempt to introduce non ideal
constitutive laws into a kinetic model has recently been
published in~\cite{rahimi16}. In this article, the authors define the
constant volume specific heat~$c_v$ as a third-order polynomial
function of the temperature of the gas, and derive a mesoscopic model
based on the moment approach. {A similar approach is
proposed in~\cite{KKA_2019} that gives a correct Prandtl number.}
 Simplified Boltzmann models for mixtures
of polyatomic gases have also been proposed in~\cite{andries2002,bisi2016,DESVILLETTES2005219},
however, high temperature effects are not addressed in these references. 

In this paper, {our goal is to construct models that are able to
  capture macroscopic effects as well as kinetic effects at a
  reasonable numerical cost, for an application to reentry flows.} We propose two ways to include high
temperature effects (vibrational modes, chemical reactions) in a
generalized BGK model.  

First, we show that vibrational modes can be
taken into account by using a temperature dependent number of degrees
of freedom. This can be used in a BGK model for polyatomic gases, but
we show that the choice of the variable used to describe the internal
energy of the molecules is fundamental here. This model allows us to
simulate a mixture of rarefied polyatomic gases (like the air) with
rotational and vibrational non equilibrium, {with a single
  distribution function for the mixture}. As a consequence, we are
able to simulate a polyatomic gas flow with a non-constant specific
heat.

Then we propose a more general BGK model that can be used to describe
a rarefied flow with both vibrational excitation and chemical
reactions, at chemical equilibrium, based on arbitrary constitutive
laws for pressure and temperature. Our BGK model is shown to be consistent
with the corresponding Navier-Stokes model in the small Knudsen number
limit.  Finally, the internal energy variable of our BGK model can be
eliminated by the standard reduced distribution
technique~\cite{HH_1968}: this gives a kinetic model for high
temperature polyatomic gases with a computational complexity which is
close to that of a simple monoatomic model. 

{Up to our knowledge, the model proposed in this work is the
  first Boltzmann model equation that allows for realistic
  equations of state and includes concentration effects in the thermal
  flux. We point out that this article is a first step towards a
  correct computation of the parietal heat flux: since we use a BGK
  model, it is clear that our model does not have a correct Prandtl
  number, as usual. This might be solved by using the ES-BGK
  approach~\cite{esbgk_poly,Mathiaud2016,Mathiaud2017} to capture the
  correct relaxation times for energy and fluxes~\cite{kustova}}.

The outline of our paper is the following. First we remind a standard
BGK model for polyatomic gases in
section~\ref{sec:polyatomicBGK}. Then, in section~\ref{sec:hightemp},
we explain how high temperature effects are taken into account to
define the internal number of degrees of freedom of molecules and
generalized constitutive laws.
A first BGK model is proposed to allow
for vibrational mode excitation with a temperature dependent number of
degrees of freedom in section~\ref{sec:deltaT}. This model is
extended to allow for arbitrary constitutive laws for pressure and
temperature in section~\ref{sec:BGK_EOS}, and this model is also
  analyzed by the Chapman-Enskog expansion. {Some features of
  our new model are illustrated by a few numerical simulations in
  section~\ref{sec:num}.}

%--------------------------------------------------------------------------
\section{Polyatomic BGK model}
%--------------------------------------------------------------------------
\label{sec:polyatomicBGK}
For
standard temperatures, a polyatomic perfect gas can be described by
the mass distribution function $F(t,x,v,\eps)$ that depends on time
$t$, position $x$, velocity $v$, and internal energy $\eps$. The
internal energy is described with a continuous variable, and takes into
account rotational modes. The corresponding number of degrees of
freedom for rotational modes is $\delta$ (see~\cite{BL1975}).

Corresponding macroscopic quantities mass density $\rho$, velocity $u$,
and specific internal energy $e$, are defined through the first 5 moments of $F$
with respect to $v$ and $\eps$: 
\begin{align}
  \rho(t,x) & = \cint{\cint{F}},\label{eq-rho}  \\
\rho u (t,x) & = \cint{\cint{v F}}, \label{eq-rhou} \\
\rho e (t,x) & = \cint{\cint{(\demi |v-u|^2 + \eps) F}} \label{eq-rhoe} ,
\end{align}
where
$\cint{\cint{\phi}} = \int_{\R^3}\int_0^{+\infty} \phi(v,\eps)\,
dvd\eps$
denotes the integral of any scalar or vector-valued function $\phi$
with respect to $v$ and $\eps$. The specific internal energy take into
account translational and rotational modes.  Other macroscopic
quantities can be derived from these definitions. The temperature is
$T$ is such that $e = \frac{3+\delta}{2}RT$, where $R$ is the gas
constant. The pressure is given by the perfect gas equation of state
(EOS) $p=\rho R T$. 

For a gas in thermodynamic equilibrium, the distribution function
reaches a Maxwellian state, defined by 
\begin{equation}  \label{eq-Maxwpoly}
M[F]  = \frac{\rho}{(2\pi R T)^{\frac{3}{2}}}
\exp\left( - \frac{|v-u|^2}{2RT}  \right) 
\Lambda(\delta)\left(\frac{\eps}{RT}\right)^{\frac{\delta}{2}-1} 
\frac{1}{RT}\exp\left( -\frac{\eps}{RT} \right),
\end{equation}
where $\rho$, $u$, and $T$ are defined above. The constant
$\Lambda(\delta)$ is a normalization factor defined by
$\Lambda(\delta) = 1/\Gamma(\frac{\delta}{2})$, so that $M[F]$
has the same 5 moments as $F$ (see above).

The simplest BGK model that can be derived from this description is
the following
\begin{equation}\label{eq-BGKpoly} 
\dt F + v\cdot \nabla_x F = \frac{1}{\tau}( M[F]-F),
\end{equation}
where $\tau$ is a relaxation time (see below).

The standard Chapman-Enskog expansion shows that this model is
consistent, \correct{with an error which is of second order with respect to the Knudsen number}, to
the following compressible Navier-Stokes equations
\begin{align*}
&   \dt \rho + \nabla\cdot \rho u = 0 \\
& \dt \rho u + \nabla \cdot (\rho u \otimes u) + \nabla p = -\nabla
\cdot \sigma \\ 
& \dt E + \nabla\cdot ((E+p)u) = -\nabla \cdot q - \nabla \cdot
(\sigma u), 
\end{align*}
where $\sigma = -\mu (\nabla u + (\nabla u)^T - \frac{2}{3+\delta}
\nabla \cdot u \, Id)$
is the shear stress tensor and $q = -\kappa \nabla T$ is the heat flux. The
transport coefficients $\mu$ and $\kappa$ are linked to the relaxation
time by the relations $\mu = \tau p$ and $\kappa = \mu c_p$, where the
specific heat at constant pressure is $c_p =
\frac{5+\delta}{2}R$. Actually, these relations define the correct
value that has to be given to the relaxation time $\tau$ of~(\ref{eq-BGKpoly}), which is
\begin{equation}  \label{eq-deftau}
\tau = \frac{\mu}{p},
\end{equation}
where the viscosity is given by a standard temperature dependent law
like $\mu(T) = \mu_{ref}(\frac{T}{T_{ref}})^{\omega}$
(see~\cite{bird}).  This implies that the Prandtl number
${\rm Pr} = \frac{\mu c_p}{\kappa}$ is equal to $1$. This incorrect
result (it should be close to $\frac{5}{7}$ for a diatomic gas, for
instance) is due to the fact that the BGK model contains only one
relaxation time. Instead it would be more relevant to include at least
three relaxation times in the model to allow for various different
time scales (viscous versus thermal diffusion time scale,
translational versus rotational energy relaxation rates). It is
possible to take these different time scales into account by using the
ESBGK polyatomic model (see~\cite{esbgk_poly}), or the Rykov model
(see~\cite{LYXZ_2014} and the references therein). See also multiple relaxation time BGK models developed for polyatomic gases in~\cite{ARS_2017,ARS_2018}. However, in this
work, the derivation of a model for high temperature gases is based on
this simple polyatomic BGK model (with a single relaxation time).

Note that this model is generally simplified by using the variable $I$ such
that the internal energy of a molecule is $\eps =
I^{\frac{2}{\delta}}$ (see~\cite{esbgk_poly}). Then the corresponding distribution ${\cal F}(t,x,v,I)$ is
defined such that ${\cal F}dxdvdI = Fdxdvd\eps$, which gives ${\cal F}
= I^{\frac{2}{\delta}-1}F$. The macroscopic quantities are defined by  
\begin{equation*}
  \rho(t,x)  = \cint{\cint{{\cal F}}}, \qquad 
\rho u (t,x)  = \cint{\cint{v {\cal F}}},\qquad
\rho e (t,x) = \cint{\cint{(\demi |v-u|^2 + I^{\frac{2}{\delta}}) {\cal F}}},
\end{equation*}
where now
$\cint{\cint{\phi}} = \int_{\R^3}\int_0^{+\infty} \phi(v,I)\,
dxdI$. The corresponding Maxwellian, which is simpler, is
\begin{equation}\label{eq-MaxwpolyI}
{\cal  M}[{\cal F}]  = \frac{\rho}{(2\pi R T)^{\frac{3}{2}}}
\exp\left( - \frac{|v-u|^2}{2RT}  \right) 
\frac{2}{\delta}\Lambda(\delta)\frac{1}{(RT)^{\frac{\delta}{2}}}
\exp\left( -\frac{I^{\frac{2}{\delta}}}{RT} \right).
\end{equation}
The corresponding BGK equation is
\begin{equation}\label{eq-BGKpolyI} 
\dt {\cal F} + v\cdot \nabla_x {\cal F} = \frac{1}{\tau}( {\cal M}[{\cal F}]-{\cal F}),
\end{equation}
which is equivalent to~\eqref{eq-BGKpoly}.

{Moreover, note that these models can be derived from~\cite{bourgat94}: in \correct{that} paper, the authors first give a Boltzmann collision operator for polyatomic gases deduced from the Borgnakke-Larsen model. In this model, the internal energy variable $\eps$ is described by a variable $I$ such that $I=\sqrt{\eps}$. By using the corresponding Maxwellian, it is easy to derive a single relaxation time BGK model. When this model is written with $\eps$, we exactly get model~\eqref{eq-BGKpoly}.}

{In the same paper~\cite{bourgat94}, the authors propose a second Boltzmann collision operator based on a model for a monoatomic gas in higher dimension, with an internal variable $w$ that lives in a $\delta$-dimension space, where $\delta$ is the number of internal degrees of freedom. The internal energy of the model is $\eps=|w|^2$. This model is written in polar coordinates $w=I\theta$, where $I$ is the norm of $w$ (and hence again the square root of $\eps$), and $\theta$ is the polar angle, and then it is reduced by integration with respect to $\theta$. The authors get a Boltzmann collision operator in which the distribution function is multiplied by a weight function $\phi(I)=I^{\delta -1}$. Again, a BGK model can be derived from this formulation, but it is different from models~\eqref{eq-BGKpoly} and~\eqref{eq-MaxwpolyI}. The resulting model has been extended by several authors to get BGK models for non polytropic gases (see section~\ref{sec:hightemp}). However, in the case of polytropic gases (i.e. constant $\delta$), this model can easily be shown to be equivalent to our model~\eqref{eq-BGKpoly}.
}

%--------------------------------------------------------------------------
\section{High temperature gases}
%--------------------------------------------------------------------------
\label{sec:hightemp}

When the temperature of the gas is larger, new phenomena appear
(vibration, chemical reactions, ionization). For instance, for
dioxygen, at $800$K, the molecules begin to vibrate, and chemical
reactions occur for much larger temperatures (for instance, dissociation of $O_2$
into $O$ starts at $2500$K).

The next sections explain how some of these effects (vibrations and
chemical reactions) can be taken into accounts in terms of
EOS and number of internal degrees of freedom.

%=========
   \subsection{Vibrations}
%   \label{subsec:}
%=========

   Of course, the definition of the specific internal energy must
   account for vibrational energy. A possible way to do so is to
   increase the number of internal degrees of freedom $\delta$, that
   now accounts for rotational and vibrational modes. However, a
   result of quantum mechanics implies that this number of
   degrees of freedom is not an integer anymore, and that it is even
   not a constant (it is temperature dependent), see the examples
   below. Vibrating gases have other properties that make them quite
   different to what is described by the standard kinetic theory of
   monoatomic gases. For instance, the specific heat at constant
   pressure $c_p$ becomes temperature dependent. However, vibrating
   gases can still be considered as perfect gases, so that the perfect
   EOS $p=\rho R T$ still holds (in fact, such gases are called
   thermally perfect gases, see~\cite{anderson}).

   Now we give two examples of gases with vibrational excitation, and
   we explain how their number of internal degrees of
   freedom is defined.

   \subsubsection{Example 1: dioxygen}
   \label{subsubsec:example1}

At equilibrium, translational $e_{tr}$ and rotational
$e_{rot}$ specific energies can be defined by
\begin{equation*}
  e_{tr} = \frac{3}{2}RT \quad \text{ and } \quad e_{rot} = RT.
\end{equation*}
This shows that a molecule of dioxygen has $3$ degrees of freedom for
translation, and $2$ for rotation. 
By using quantum mechanics~\cite{anderson}, vibrational specific energy $e_{vib}$ is
found to be
\begin{equation*}
    e_{vib} = \frac{T^{vib}_{O_2}/T}{\exp(T^{vib}_{O_2}/T)-1} RT,
\end{equation*}
where $T^{vib}_{O_2}=2256$K is a reference temperature.

The number of ``internal'' degrees of freedom $\delta$, related to
rotation and vibration modes only, is defined such that the total
specific internal energy $e$ is
\begin{equation*}
    e = e_{tr}+e_{rot}+e_{vib} = \frac{3+\delta}{2}RT.
\end{equation*}
By combining this relation with the relations above, we find that
$\delta$ is actually temperature dependent, and defined by
\begin{equation*}
  \delta(T) = 2 + 2\frac{T^{vib}_{O_2}/T}{\exp(T^{vib}_{O_2}/T)-1}.
\end{equation*}

Accordingly, the specific heat at constant pressure $c_p$, which is
defined by $dh = c_pdT$, where the enthalpy is $h = e +
\frac{p}{\rho}$, can be computed as follows. Since $p=\rho R T$, we
find $h = \frac{5+\delta(T)}{2}RT$, and hence the enthalpy depends on
$T$ only, through a nonlinear relation. This means that $c_p= h'(T)$
is not a constant anymore, while we have $c_p = \frac{5+\delta}{2}R$
without vibrations. Finally, note that the relation that defines the
temperature $T$ through the internal specific energy $e =
\frac{3+\delta(T)}{2}RT $ now is nonlinear (it has to be
inverted numerically to find $T$).

   \subsubsection{Example 2: air}
   \label{subsubsec:example2}

   The air at moderately high temperatures ($T<2500$K) is a
   non-reacting mixture of nitrogen $N_2$ and dioxygen $O_2$. To
   simplify, assume that their mass concentrations are 
   $c_{N_2} = 75\%$ and $c_{O_2}=25\%$. These two species are perfect
   gases with their own gas constants $R_{N_2}$ and $R_{O_2}$. {The
     gas constant $R$ of the mixture can be defined by
     $R = c_{N_2}R_{N_2} + c_{O_2}R_{O_2}$ (see~\cite{anderson}).}

The specific internal energy is defined by $e=
c_{O_2}e_{O_2}+c_{N_2}e_{N_2}$. The energy of each species can be
computed like in our first example (see
section~\ref{subsubsec:example1}), and we find: 
\begin{equation*}
  e_{N_2} =  \frac{3+\delta_{N_2}(T)}{2}R_{N_2}T \qquad \text{and}
  \qquad 
 e_{O_2} =  \frac{3+\delta_{O_2}(T)}{2}R_{O_2}T,
\end{equation*}
where the number of internal degrees of freedom of each
species are
\begin{equation}\label{eq-deltaO2N2} 
\delta_{N_2}(T) = 2 + 2\frac{T^{vib}_{N_2}/T}{\exp(T^{vib}_{N_2}/T)-1}
 \qquad \text{and} \qquad 
\delta_{O_2}(T) = 2 + 2\frac{T^{vib}_{O_2}/T}{\exp(T^{vib}_{O_2}/T)-1}, 
\end{equation}
with $T^{vib}_{N_2}=3373$K and $T^{vib}_{O_2}=2256$K. Then the specific
internal energy of the mixture is
\begin{equation*}
\begin{split}
 e& =c_{O_2}\frac{3+\delta_{O_2}(T)}{2} R_{O_2}T+c_{N_2}\frac{3+\delta_{N_2}(T)}{2} R_{N_2}T \\
  &=\frac32RT+\frac12(c_{O_2}\delta_{O_2}(T)R_{O_2}+c_{N_2}\delta_{N_2}(T)R_{N_2})T
  \\
& = \frac{3+\delta(T)}{2}RT 
\end{split}
\end{equation*}
with the number of internal degrees of freedom given by
\begin{equation}\label{eq-deltaair} 
\begin{split}
   \delta(T) & =\frac{c_{O_2}\delta_{O_2}(T)  R_{O_2}+c_{N_2}\delta_{N_2}(T)R_{N_2}}{R} \\
    & =  2+\frac2R\left(c_{O_2}R_{O_2}\frac{T^{vib}_{O_2}/T}{\exp(T^{vib}_{O_2}/T)-1}+c_{N_2}R_{N_2}\frac{T^{vib}_{N_2}/T}{\exp(T^{vib}_{N_2}/T)-1}\right).
\end{split}
\end{equation}

We show in figure~\ref{fig:delta} the number of internal degrees of
freedom for each species and for the whole mixture. For all gases,
$\delta$ is equal to $2$ below $500$K, which means that only the rotational
modes are excited: each species is a diatomic gas with 2 degrees of freedom of
rotation, and the mixture behaves like a diatomic gas too. Then the number
of degrees of freedom increases with the temperature, and is greater than
\correct{$3$} for $T=3000$K. At this temperature, the number of degrees of
freedom for vibrations is \correct{around $1$}. Note that in addition to this graphical
analysis, it can be analytically proved
that all the $\delta$ computed here are increasing functions of $T$.

%=========
   \subsection{Chemical reactions}
   \label{subsec:chemical}
%=========

When chemical reactions have to be taken into account (for the air,
this starts at $2500$K), the perfect gas EOS still holds for each species,
but the EOS for the reacting mixture is less simple. To avoid the
numerical solving of the Navier-Stokes equations for all the species,
in the case of an equilibrium chemically reacting gas, it is convenient
to use instead a Navier-Stokes model for the mixture (considered as a
single species), for which tabulated EOS $p = p(\rho,e)$ and even
a tabulated temperature law $T=T(\rho,e)$ are used
(see~\cite{anderson}, chapter 11).

More precisely, it can be proved that for a mixture of thermally
perfect gases in chemical equilibrium, with a constant atomic nuclei composition, two state variables, like
$\rho$ and $e$, are sufficient to uniquely define the chemical
composition of the mixture. Let us precise what this means, with
notations that will be useful in the paper.
For each species of the mixture, numbered with index
  $i$: 
\begin{itemize}
  \item its concentration $c_i$ depends on $\rho$ and $e$ only: $c_i =
    c_i(\rho,e)$ ; 
  \item its pressure $p_i$ satisfies the usual perfect gas law: $p_i =
    \rho_i R_i T$, where $R_i$ is the gas constant of the species and $\rho_i = c_i(\rho,e) \rho$, so that $p_i =
    p_i(\rho,e)$ ; 
  \item its specific energy $e_i$ and enthalpy $h_i$ depend on $T$
    only: $e_i = e_i(T)$ and $h_i=h_i(T)$,  
where $e_i(T) = \frac{3+\delta_i(T)}{2} R_iT+e_i^{f,0}$, with
    $e_i^{f,0}$ is the energy of formation of the $i$th molecule and $\delta_i(T) $ is the number of
    activated internal degrees of freedom of the molecule that might
    depend on the temperature, see the previous sections ($\delta_i
    =0$ for monoatomic molecules).
\end{itemize}

For compressible Navier-Stokes equations for an equilibrium chemically
reacting mixture, these quantities are not necessary. Instead, it is
sufficient to define (with analytic formulas or tables): 
\begin{itemize}
  \item the total pressure $p = \sum_i p_i(\rho,e)$ so that $p =
    p(\rho,e) = \rho R(\rho,e) T$, with $R(\rho,e)= \sum_i
    c_i(\rho,e)R_i$ ;
      \item the temperature $T$, through the relation $e = \sum_i
    c_i(\rho,e)e_i(T)$, so that $T = T(\rho,e)$ ;
  \item the total specific enthalpy $h = \sum_i c_ih_i$, so that $h =
    h(\rho,e) = e + \frac{p(\rho,e)}{\rho}$.
\end{itemize}
We refer to~\cite{anderson} for details on this subject.

\section{BGK models for high temperature gases}

%--------------------------------------------------------------------------
\subsection{A polyatomic BGK model for a variable number of degrees of freedom}
%--------------------------------------------------------------------------
\label{sec:deltaT}

In this section, we propose an extension of the polyatomic BGK
model~(\ref{eq-BGKpoly}) to take into account temperature dependent
number of internal degrees of freedom, like in examples of
sections~\ref{subsubsec:example1} and~\ref{subsubsec:example2}.

This extension (already obtained in~\cite{hdr}) is quite obvious, since we just replace the constant
$\delta$ in~(\ref{eq-Maxwpoly}) by the temperature dependent
$\delta(T)$. For completeness, this model is given below:
\begin{equation}\label{eq-BGKpolydelta} 
\dt F + v\cdot \nabla_x F = \frac{1}{\tau}( M[F]-F),
\end{equation}
with 
\begin{equation}  \label{eq-Maxwpolydelta}
M[F]  = \frac{\rho}{(2\pi R T)^{\frac{3}{2}}}
\exp\left( - \frac{|v-u|^2}{2RT}  \right) 
\Lambda(\delta(T))\left(\frac{\eps}{RT}\right)^{\frac{\delta(T)}{2}-1} 
\frac{1}{RT}\exp\left( -\frac{\eps}{RT} \right).
\end{equation}
The macroscopic quantities are defined
by~(\ref{eq-rho})--(\ref{eq-rhoe}), while the temperature $T$ is
defined by
\begin{equation}\label{eq-eT} 
e = \frac{3+\delta(T)}{2}RT.
\end{equation}
Indeed, this implicit relation is invertible if, for instance, $\delta(T)$ is
an increasing function of $T$. This is true, at least for the
examples shown in section~\ref{subsubsec:example2}: it can easily be
shown that equations~(\ref{eq-deltaO2N2}) and~(\ref{eq-deltaair})
define increasing functions of $T$. 
Finally, the relaxation time $\tau$ is given by~(\ref{eq-deftau})
with $p=\rho R T$.

The same model has been proposed independently in~\cite{KKA_2019}, and
extended to an ES-BGK version to have correct transport coefficients.
\correct{Note that in~\cite{KKA_2019}, the temperature dependent
  number of degrees of freedom $\delta(T)$ is constructed through a
  given law for $c_v$ (the specific heat at constant volume), which is different from our approach.}

{
Note that model~(\ref{eq-MaxwpolyI})--(\ref{eq-BGKpolyI}) cannot be used here. Indeed, the change of variables $I = \varepsilon^{\frac{\delta(T)}{2}}$ now depends on time and space through $T$, and the corresponding model written with variable $I$ contains many more terms than~(\ref{eq-MaxwpolyI})--(\ref{eq-BGKpolyI}).} 

{Finally, we mention the alternate approach derived from the second polyatomic Boltzmann operator of~\cite{bourgat94}: in~\cite{ARS_2017, BRS_2018}, a weight function is used to fit any given non polytropic gas law. This requires to invert a Laplace transform, and is different from the approach presented here.
}

% Indeed, let us check the conservation of energy: the standard
% argument is to multiply~(\ref{eq-BGKpolyI}) by $\demi |v|^2 +
% \eps(I)$ and to integrate it with respect to $v$ and $I$. However, since
% $\eps(I) = I^{\frac{\delta(T)}{2}}$, and $\delta(T)$ is not constant, the partial derivatives do not
% commute with $\eps(I)$, and we cannot recover a local conservation law
% for the energy. More precisely, we  have
% \begin{equation*}
% \begin{split}
% &   \cint{\cint{(\demi |v|^2 + \eps(I)) (\dt {\cal F} + v\cdot
%     \nabla_x {\cal F} )}} \\
% & \neq   \dt  \cint{\cint{(\demi |v|^2 + \eps(I)) {\cal F}}  }
%  +  \nabla_x \cdot\cint{\cint{v  (\demi |v|^2 + \eps(I))  {\cal F}}}.
% \end{split}
% \end{equation*}
% Therefore, the two extended models are not equivalent anymore, and
% since the second one is not conservative, it will not be used anymore
% in this paper.

%--------------------------------------------------------------------------
\subsection{A more general BGK model for arbitrary constitutive laws}
%--------------------------------------------------------------------------
\label{sec:BGK_EOS}

In this section, we now want to extend the polyatomic BGK
model~(\ref{eq-BGKpoly}) so as to be consistent with arbitrary
constitutive laws $p = p(\rho,e)$ and $T=T(\rho,e)$ that can be used for
an equilibrium chemically reacting gas (see
section~\ref{subsec:chemical}). 

We define the gas constant of the mixture by
\begin{equation}  \label{eq-Rrhoe}
R(\rho,e) = \frac{p(\rho,e)}{\rho T(\rho,e)}
\end{equation} 
so that the EOS of perfect gases $p(\rho,e) = \rho R(\rho,e) T(\rho,e)$
  holds (note that a definition of $R$ from the concentrations and the gas
  constant of each species can also be used, see section~\ref{subsec:chemical}). We also
  note $\delta(\rho,e)$ the number of internal degrees of freedom
  defined such that $e =
  \frac{3+\delta(\rho,e)}{2}R(\rho,e)T(\rho,e)$.

% ANCIENNE VERSION 
% The idea is to modify the
% Maxwellian~(\ref{eq-Maxwpoly}) so as to satisfy some constraints
% derived from the asymptotic fluid limits.

% In the Euler limit, $F$ converges towards its own local Maxwellian
% $M[F]$, and the dynamical pressure must be equal to the equilibrium
% kinetic pressure, that is to say
% \begin{equation*}
% p(\rho,e)= \cint{\cint{|v-u|^2M[F]}}.
% \end{equation*}
% Moreover, the internal energy must be equal to its equilibrium value,
% which gives:
% \begin{equation*}
% \rho e = \cint{\cint{(\demi |v-u|^2 + \eps)M[F]}}.
% \end{equation*}
% Using definition~(\ref{eq-Maxwpoly}), we compute the two previous integrals to find the constraints
% \begin{equation*}
%  p(\rho,e) = \rho R T \qquad \text{and} \qquad \rho e  =
%  \frac{3+\delta}{2}\rho RT.
% \end{equation*}
% These relations are satisfied if the product (gas constant$\times$temperature)
% used
% in the Max\-well\-ian~(\ref{eq-Maxwpoly}) is defined by $RT = p(\rho,e)/\rho$, which will
% be denoted by $\theta(\rho,e)$ in the following, 
% and if $\delta$ is defined by 
% %\begin{equation}  \label{eq-deltarhoe}
% $\delta(\rho,e) = \frac{2e}{p(\rho,e)/\rho} -3$.
% %\end{equation}

% Moreover, the temperature $T(\rho,e)$ is taken into account through the
% relaxation time and the viscosity: relation~(\ref{eq-deftau}) gives
% %\begin{equation}  \label{eq-taurhoe}
% $\tau(\rho,e) = \mu(T(\rho,e))/p(\rho,e)$.
% %\end{equation}

Our BGK model is obtained by using the same approach as in
section~\ref{sec:deltaT}: we replace $R$ and $\delta$
in~\eqref{eq-Maxwpoly}--\eqref{eq-BGKpoly}
by their non constant values $R(\rho,e)$ and $\delta(\rho,e)$, so that
our model is
\begin{equation}\label{eq-BGK_EOS} 
\dt F + v\cdot \nabla_x F = \frac{1}{\tau(\rho,e)}( M[F]-F),
\end{equation}
with 
\begin{equation}  \label{eq-Maxwchimie}
M[F]  = \frac{\rho}{(2\pi \theta(\rho,e))^{\frac{3}{2}}}
\exp\left( - \frac{|v-u|^2}{2\theta(\rho,e)}  \right) 
\Lambda(\delta(\rho,e))\left(\frac{\eps}{\theta(\rho,e)}\right)^{\frac{\delta(\rho,e)}{2}-1} 
\frac{1}{\theta(\rho,e)}\exp\left( -\frac{\eps}{\theta(\rho,e)} \right),
\end{equation}
where the macroscopic quantities are defined by
\begin{equation} \label{eq-moments_chimie} 
 \rho(t,x)  = \cint{\cint{F}}, \qquad 
\rho u (t,x)  = \cint{\cint{v F}}, \qquad 
\rho e (t,x)  = \cint{\cint{(\demi |v-u|^2 + \eps) F}},  
\end{equation} 
the variable $\theta(\rho,e)$ is
\begin{equation}  \label{eq-theta}
\theta(\rho,e) = R(\rho,e)T(\rho,e),
\end{equation}
the number of internal degrees of freedom is
\begin{equation}  \label{eq-deltarhoe}
\delta(\rho,e) = \frac{2e}{R(\rho,e)T(\rho,e)} -3.
\end{equation}
and the relaxation time is
\begin{equation}  \label{eq-tauchimie}
\tau(\rho,e) = \frac{\mu(T(\rho,e))}{p(\rho,e)},
\end{equation}
while,  $p(\rho,e)$ and  $T(\rho,e)$ are given
by analytic formulas or numerical tables. 

% Note that in a case where
% $p(\rho,e)$ and $T(\rho,e)$ are defined according to laws derived for
% a chemically reacting mixture at equilibrium, as explained in
% section~\ref{subsec:chemical}, then we have $\theta(\rho,e) =
% R(\rho,e)T(\rho,e)$, and $\delta(\rho,e) =
% \frac{2e}{R(\rho,e)T(\rho,e)} -3$, so that our BGK model can be
% simply obtained from the polyatomic
% model~\eqref{eq-Maxwpoly}--\eqref{eq-BGKpoly} with
% $\delta=\delta(\rho,e)$ and $R=R(\rho,e)$. 

\begin{remark} \label{rem:compat_modeles} 
  This model is more general than our previous
  model~\eqref{eq-BGKpolydelta}--\eqref{eq-Maxwpolydelta} defined to
  account for vibrations. In other
  words, model~\eqref{eq-BGKpolydelta}--\eqref{eq-Maxwpolydelta} can
  be written under the previous form. This is explained below. 

First, relation~\eqref{eq-eT} defines the temperature $T$ as a function of
$e$, which can be written $T=T(\rho,e)$. Then, the perfect gas
EOS $p=\rho R T(\rho,e)$ gives $p=p(\rho,e)$. 
Then, by definition of $T$, the number of internal degrees of freedom,
given by analytic laws~(\ref{eq-deltaO2N2}) or~(\ref{eq-deltaair}) for
instance, satisfies~\eqref{eq-eT}, and hence can be written $
\delta(T) = 2\frac{e}{RT} - 3 $,
which is exactly~(\ref{eq-deltarhoe}). Moreover, the
relaxation time $\tau$ given by~(\ref{eq-deftau}) is compatible
with definition~(\ref{eq-tauchimie}). Finally, the Maxwellian defined
by~(\ref{eq-Maxwpolydelta}) is clearly compatible with
definition~(\ref{eq-Maxwchimie}). 

Consequently,  the analysis given in the next sections will be made
with this more general
model~(\ref{eq-BGK_EOS})--(\ref{eq-tauchimie}) only.
\end{remark}

%=========
   \subsection{Compressible  Navier-Stokes asymptotics}
   \label{subsec:CE}
%=========

In this section, we prove the following formal result. 
\begin{proposition} \label{prop:CE}
  The moments of $F$, solution of the BGK
  model~\eqref{eq-BGK_EOS}--\eqref{eq-tauchimie}, satisfy the
  following Navier-Stokes equations, up to $O(\Kn^2)$:
\begin{equation} \label{eq:NS_chimie}
\begin{split}
& \dt \rho + \nabla \cdot \rho u = 0, \\
& \dt \rho u + \nabla\cdot  (\rho u\otimes u) + \nabla p = -\nabla
\cdot \sigma , \\
& \dt E + \nabla \cdot (E+p)u = -\nabla\cdot  q -
\nabla\cdot(\sigma u) ,
\end{split}
\end{equation}
where $\Kn$ is the Knudsen number (defined below), $E$ is the total
energy density defined by $E  =
\demi \rho |u|^2 + \rho e$, and $\sigma$ and $q$ are the shear stress
tensor and heat flux vector defined by
\begin{equation} \label{eq-sigmaq} 
\begin{split}
& \sigma = -  \mu \left(\nabla u + (\nabla u)^T
      -\C \nabla \cdot u \, \Id\right) ,\\
& q =   - \mu\nabla h,
\end{split}
\end{equation}
with $h= e + \frac{p(\rho,e)}{\rho}$ is the
enthalpy, and $\C =
\frac{\rho^2}{p(\rho,e)} \partial_{\rho}(\frac{p(\rho,e)}{\rho})
+ \partial_e(\frac{p(\rho,e)}{\rho})$. 
\end{proposition}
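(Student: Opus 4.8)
The plan is to carry out the standard Chapman--Enskog expansion for the BGK equation~\eqref{eq-BGK_EOS}, being careful that the equilibrium Maxwellian~\eqref{eq-Maxwchimie} now depends on $(\rho,e)$ through the nonlinear functions $\theta(\rho,e)$ and $\delta(\rho,e)$ rather than through a simple linear temperature law. First I would nondimensionalize: introduce the Knudsen number $\Kn$ as the ratio of the mean free path to a macroscopic length scale, rescale $\tau \to \Kn\,\tau$, and write $F = M[F] + \Kn F_1 + O(\Kn^2)$. The zeroth-order identity $F = M[F] + O(\Kn)$ gives, upon taking the collision invariants $1, v, \demi|v-u|^2+\eps$ against the transport operator, the Euler system; this fixes the leading moments of $F_1$ to be zero (the moments of $M[F]$ already match those of $F$). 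Then $F_1 = -\tau(\dt M[F] + v\cdot\nabla_x M[F])$, where the time derivatives are replaced using the Euler equations just obtained.

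The core computation is then to evaluate the stress tensor $\sigma = \Kn\cint{\cint{(v-u)\otimes(v-u)F_1}}$ and heat flux $q = \Kn\cint{\cint{(\demi|v-u|^2+\eps)(v-u)F_1}}$. The velocity-space Gaussian part of $M[F]$ has variance $\theta(\rho,e)$, so the translational Gaussian moments are the usual ones with $RT$ replaced by $\theta$; the internal-energy factor is a Gamma distribution in $\eps/\theta$ with shape $\delta(\rho,e)/2$, whose moments give $\cint{\cint{\eps\, M[F]}} = \rho\frac{\delta}{2}\theta$ and $\cint{\cint{\eps^2 M[F]}}=\rho\frac{\delta}{2}(\frac{\delta}{2}+1)\theta^2$, consistently with $\rho e = \rho\frac{3+\delta}{2}\theta$. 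Differentiating $\log M[F]$ with respect to $x$ produces terms in $\nabla\rho$, $\nabla u$, and $\nabla\theta$, together with the $\eps$-dependent coefficient $\bigl(\frac{\delta}{2}-1-\frac{\eps}{\theta}\bigr)$ multiplying $\nabla\log\theta$ and a term $\frac12\log(\eps/\theta)\,\nabla\delta$ multiplying $\nabla\delta$. One then substitutes the Euler relations for $\dt\rho$, $\dt u$, $\dt e$ to eliminate time derivatives; since $\theta$ and $\delta$ are functions of $(\rho,e)$ alone, their material derivatives are expressed via $\partial_\rho$ and $\partial_e$ acting on $\dt\rho$ and $\dt e$. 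After the dust settles, the odd-in-$(v-u)$ and traceless parts yield $\sigma = -\mu(\nabla u + (\nabla u)^T - \C\nabla\cdot u\,\Id)$ with $\mu=\tau p$; the factor $\C$ replaces the polytropic $\frac{2}{3+\delta}$ and must come out to be $\frac{\rho^2}{p}\partial_\rho(p/\rho)+\partial_e(p/\rho)$ precisely because the pressure-dilatation term in the Euler energy equation carries these partial derivatives.

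For the heat flux, the $(v-u)$-odd moments weighted by $\demi|v-u|^2+\eps$ will combine the translational contribution ($\propto \nabla\theta$, with the $\frac52$ coming from Gaussian fourth moments) and the internal contribution (from the $\nabla\theta$ and $\nabla\delta$ pieces of the Gamma factor, the latter involving $\cint{\cint{\eps\log(\eps/\theta)M[F]}}$, a digamma-type integral). The claim is that all of this collapses to $q = -\mu\nabla h$ with $h = e + p(\rho,e)/\rho$. I expect this to be the main obstacle: one must show that the seemingly separate gradients $\nabla e$, $\nabla\rho$, $\nabla\theta$, $\nabla\delta$ reorganize exactly into $\mu\nabla h$. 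The cleanest route is to write $h = h(\rho,e)$, so $\nabla h = \partial_\rho h\,\nabla\rho + \partial_e h\,\nabla e$, and verify termwise that the Chapman--Enskog coefficients of $\nabla\rho$ and $\nabla e$ match $-\mu\partial_\rho h$ and $-\mu\partial_e h$; here one uses $\partial_e h = 1 + \partial_e(p/\rho)$ and $\partial_\rho h = \partial_\rho(p/\rho)$, together with the definitions $\theta = RT = p/\rho$ and $e = \frac{3+\delta}{2}\theta$ to trade $\nabla\theta$ and $\nabla\delta$ for $\nabla\rho$ and $\nabla e$. A useful sanity check throughout is that setting $p=\rho RT$ with $RT$ a linear function of $e/(3+\delta)$ and $\delta$ constant recovers $\C=\frac{2}{3+\delta}$ and the classical $q=-\kappa\nabla T$ with $\kappa = \mu c_p$, matching the polyatomic BGK result recalled in Section~\ref{sec:polyatomicBGK}. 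I would also note that the formal nature of the result lets me ignore regularity and remainder estimates, so the proof is essentially this bookkeeping of Gaussian and Gamma moments together with the Euler substitutions.
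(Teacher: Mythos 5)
Your proposal is correct and follows essentially the same route as the paper: nondimensionalization with $\tau\to\Kn\,\tau$, the first-order Chapman--Enskog closure $G=-\tau(\dt M[F]+v\cdot\nabla_x M[F])+O(\Kn)$, elimination of time derivatives through the Euler system (which is exactly where $\C=\frac{\rho^2}{p}\partial_\rho(p/\rho)+\partial_e(p/\rho)$ emerges, via the equation $\dt\theta+u\cdot\nabla\theta=-\theta\,\C\,\nabla\cdot u$), and Gaussian/Gamma moment bookkeeping leading to $\sigma$ and $q=-\mu\nabla h$. The only difference is organizational: where you propose to evaluate the $\nabla\delta$ and digamma-type integrals explicitly and then match coefficients of $\nabla\rho$ and $\nabla e$ against $\partial_\rho h$ and $\partial_e h$, the paper factorizes $M[F]=M_{tr}[F]\,M_{int}[F]$ and uses $\cint{M_{int}[F]}_{\eps}=1$ and $\cint{\eps\,M_{int}[F]}_{\eps}=\tfrac{\delta}{2}\theta$ to differentiate under the integral, so those terms collapse to $0$ and $\nabla(\tfrac{\delta}{2}\theta)$ without any digamma computation, and $h=\tfrac{5+\delta}{2}\theta=e+p/\rho$ is read off directly.
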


Note that this result is consistent with the Navier-Stokes equations
obtained for non reacting gases. For instance, in case of
a thermally perfect gas, i.e when the enthalpy depends only on the
temperature (see~\cite{anderson}), we find that the
heat flux is $q =   - \kappa\nabla T(\rho,e)$, where the heat transfer
coefficient is $\kappa = \mu c_p$, with the heat capacity at constant
pressure is $c_p = h'(T)$. In such case, the Prandtl number, defined
by ${\rm Pr} = \frac{\mu c_p}{\kappa}$, is 1, like in usual BGK
models.

Moreover, this result gives a volume viscosity (also called second
coefficient of viscosity or bulk viscosity) which is $\omega = \mu(\frac{2}{3}
- \C)$. In the case of a gas with a constant $\delta$, like in a non
vibrating gas, this gives $\C = \frac{2}{3 + \delta}$, and hence
$\omega = \frac{2\delta}{3(\delta +3)}\mu$. For a monoatomic gas,
$\delta = 0$, and we find the usual result $\omega =0$.

This result is proved by using the standard Chapman-Enskog
expansion. The main steps of this proof are given in
sections~\ref{subsubsec:ndf} to~\ref{subsubsec:ns}, while some
technical details are given in appendix~\ref{app:gaussian}.

\subsubsection{Comments on this model}
\label{subsubsec:comments}

{For reacting gases, our model is consistent with the fact that the
  energy flux accounts for energy transfer by diffusion of chemical
  species. Indeed, if we assume that our constitutive laws satisfy the
  relations given in section~\ref{subsec:chemical}, then 
the enthalpy
  $h$ that appears in the heat flux
  in~\eqref{eq-sigmaq} is also $h=\sum_i c_i h_i$. Since $h_i$ is a function
  of $T$ only, we have
\begin{align*}
q & = - \mu\nabla h  = - \mu \sum_i    c_i\nabla h_i 
                       -\mu \sum_i h_i\nabla c_i \\
%  & = - \mu\sum_i    c_ic_{p_i}\nabla T
% -\mu \sum_i  h_i\nabla c_i \\
&  = - \mu c_p\nabla T -\mu \sum_i  h_i\nabla c_i,
\end{align*}
where $c_p = \sum_ic_i h'_i(T)$.}

{Some standard compressible Navier-Stokes solvers for reacting gases in
chemical equilibrium use the following heat flux
\begin{equation*}
  q =  - \kappa \nabla T  + \sum_i  \rho c_i U_i h_i,
\end{equation*}
in which the diffusion velocity $U_i$ can be modeled by the Fick law $\rho c_i
U_i=-\rho D_i \nabla c_i$ (see~\cite{anderson}), where $D_i$ is the
diffusion coefficient of the $ith$ species. }

{Our heat flux can indeed be written under the same form, with
$\kappa = \mu c_p$, and $D_i = \mu / \rho$. Consequently, the Prandtl
number ${\rm Pr}= \mu c_p/\kappa$ and Schmidt numbers
$S_i = \mu / \rho D_i$ are all equal to 1, which is the consequence of
our single time relaxation in our model. Usually Schmidt and Prandtl
numbers are very close, and hence recovering a correct Prandtl number
with an ESBGK-like approach should also give more correct the Schmidt
numbers.}
  
\correct{However, note that the compressible Navier-Stokes model with heat flux
given by the formula above leads to a violation of the second law
of thermodynamics. Indeed, classical theory of non equilibrium
thermodynamics states that the heat flux can only be given by a
temperature gradient, so that the physical entropy production due to
the heat flux ($-\frac{q}{T^2}\cdot \nabla T$) is non negative
(see~\cite{Struchtrup_moments}). Here, the heat flux depends on
$c_i$, and hence on $\rho$. This implies that $q$ contains a
$\nabla\rho$ term that will induce a $\nabla\rho\cdot \nabla T$ term,
which has an undefined sign in the entropy production. This is clearly
in contradiction with the second law.}

\correct{This drawback} is consistent with
    the fact that we are not able to prove a H-theorem for our
    model. But we believe our model is still interesting, since in its
    hydrodynamic limit, it is consistent with compressible
    Navier-Stokes models that are used for atmospheric
    reentry. These models are also probably not compatible with the second
    principle too, due to some terms in the thermal flux that are usually
    neglected.

\correct{Another} drawback of this model is its physical inconsistency at
  equilibrium, as it can be seen with the following example. Consider
  a mixture of two inert gases and suppose they are at collisional
  equilibrium with each other: then the equilibrium distribution is
  the sum of two Maxwellian distributions with different molar masses
  so that it cannot be reduced to a single Maxwellian distribution. At
  the contrary, our model, which describes the mixture by a single
  distribution, will necessarily give a single Maxwellian at
  equilibrium.  In case of an air flow, the difference in molar mass
  of nitrogen and dioxygen is small (around 12\%), and our single
  Maxwellian should not be very different from the exact equilibrium.
  Of course, the same problem occurs with reacting gases at
  equilibrium, except if the concentration of the product of chemical
  reactions (like O, NO, etc.) is small enough.

\subsubsection{Non-dimensional form}
\label{subsubsec:ndf}

Now we start the proof of the result given in proposition~\ref{prop:CE}.
We choose a characteristic length $x_*$, mass density $\rho_*$, and
energy $e_*$. This induces characteristic values for 
pressure $p_* =\rho_*e_*$, temperature $T_* = T(\rho_*,e_*)$, 
molecular and bulk
velocities $v_* = u_* = \sqrt{e_*}$, time $t_* = x_*/v_*$, internal energy $\eps_* =
e_*$, viscosity $\mu_* = \mu(T_*)$, relaxation time $\tau_* = \mu_*/p_*$, and distribution
$F_* = \rho_*/e_*^{5/2}$.

 By using the non-dimensional variables $w' = w/w_*$ (where $w$ stands for any
variables of the problem),
model~\eqref{eq-BGK_EOS}--\eqref{eq-tauchimie} can be written
\begin{equation}\label{eq-BGKpolydelta_adim} 
\partial_{t'} F' + v'\cdot \nabla_{x'} F' = \frac{1}{\Kn\, \tau'(\rho',e')}( M'[F']-F'),
\end{equation}
with 
\begin{equation}  \label{eq-Maxwchimie_adim}
M'[F']  = \frac{\rho'}{(2\pi \theta')^{\frac{3}{2}}}
\exp\left( - \frac{|v'-u'|^2}{2\theta'}  \right) 
\Lambda(\delta')\left(\frac{\eps'}{\theta'}\right)^{\frac{\delta'}{2}-1} 
\frac{1}{\theta'}\exp\left( -\frac{\eps'}{\theta'} \right),
\end{equation}
where the macroscopic quantities are defined by
\begin{equation} \label{eq-moments_chimie_adim} 
 \rho'(t',x')  = \cint{\cint{F'}}, \qquad 
\rho' u' (t',x')  = \cint{\cint{v' F'}}, \qquad 
\rho' e' (t',x')  = \cint{\cint{(\demi |v'-u'|^2 + \eps') F'}},  
\end{equation} 
the variable $\theta'$ is
\begin{equation}  \label{eq-theta_adim}
\theta' = R'T' %\frac{p'}{\rho'},
\end{equation}
the number of internal degrees of freedom is
\begin{equation}  \label{eq-deltarhoe_adim}
\delta' = \delta = \frac{2e'}{R'T'} -3, %\frac{2e'}{p'/\rho'} -3.
\end{equation}
and the relaxation time is
\begin{equation}  \label{eq-tauchimie_adim}
\tau' = \frac{\mu'}{p'},
\end{equation}
while $p' = p(\rho_*\rho',e_*e')/\rho_*e_*$,
$T' = T(\rho_*\rho',e_*e')/T_*$, $R' = p'/\rho'T'$, and $\mu' =
\mu(T(\rho,e))/\mu_*$. Finally, the Knudsen number $\Kn$ that appears in~(\ref{eq-BGKpolydelta_adim})
is defined by
\begin{equation}  \label{eq-Kn}
\Kn = \frac{\tau_*}{t_*} = \frac{\lambda_*}{x_*},
\end{equation}
where $\lambda_* = \tau_*v_*$ can be viewed as the mean free path.

Note that, to simplify the notations, the dependence of ($p'$,
$\theta'$, $R'$, $\delta'$, $\tau'$, $p'$, $T'$) on $\rho'$ and $e'$ is
not made explicit any more in the previous expressions. Moreover, in
the sequel, the primes will be removed too.

\subsubsection{Conservation laws}
\label{subsubsec:conslaw}

The conservation laws induced by the non-dimensional BGK
model~(\ref{eq-BGKpolydelta_adim}) are obtained by
multiplying~(\ref{eq-BGKpolydelta_adim}) by $1$, $v$, and $(\demi
|v|^2+ \eps)$, and then by integrating it with respect to $v$ and
$\eps$. By using the Gaussian integrals given in
appendix~\ref{app:gaussian}, we get
\begin{equation}\label{eq-cons} 
\begin{split}
& \dt \rho + \nabla\cdot \rho u = 0, \\
& \dt \rho u + \nabla \cdot (\rho u\otimes u + \Sigma(F)) = 0, \\
& \dt E + \nabla \cdot (Eu + \Sigma(F) u + q(F)\cdot u) = 0,
\end{split}
\end{equation}
where the stress tensor $\Sigma(F)$ and the heat flux vector $q(F)$
are defined by 
\begin{align}
  & \Sigma(F) = \cint{\cint{(v-u)\otimes (v-u)F}}, \label{eq-SigmaF}  \\
  & q(F) = \cint{\cint{(\demi |v-u|^2+\eps)(v-u)F}} \label{eq-qF} .
\end{align}

\subsubsection{Euler equations}
\label{subsubsec:euler}

The Euler equations of compressible gas dynamics can be obtained as
follows. Equation~(\ref{eq-BGKpolydelta_adim}) implies the first order
expansion $F =
M[F]+O(\Kn)$, and hence $\Sigma(F) = \Sigma(M[F]) + O(\Kn)$ and $q(F)
= q(M[F])+O(\Kn)$. Using Gaussian integrals given in
appendix~\ref{app:gaussian} gives
\begin{equation*}
  \Sigma(M[F]) = p \Id, \quad \text{ and } \quad
q(M[F]) = 0.
\end{equation*}
Consequently, the conservation laws~(\ref{eq-cons}) yields
\begin{equation*}%\label{eq-consE} 
\begin{split}
& \dt \rho + \nabla\cdot \rho u = 0, \\
& \dt \rho u + \nabla \cdot (\rho u\otimes u ) + \nabla p = O(\Kn), \\
& \dt E + \nabla \cdot( (E+p)u) = O(\Kn),
\end{split}
\end{equation*}
that are the Euler equations of compressible gas dynamics, up to
$O(\Kn)$ terms, with the given EOS $p=p(\rho,e)$.

For the following, it is useful to rewrite these equations as
evolution equations for non-conservatives variables $\rho$, $u$, and
$\theta$. After some algebra, we get
\begin{equation}\label{eq-cons-theta} 
 \begin{split}
& \dt \rho + u \cdot\nabla \rho  = - \rho \nabla \cdot u, \\
& \dt u + (u \cdot \nabla) u  = -\frac{1}{\rho}\nabla p + O(\Kn), \\
& \dt \theta + u\cdot \nabla \theta = -\theta \C \nabla \cdot u + O(\Kn),
\end{split}
\end{equation}
where $\C$ is given by 
\begin{equation}  \label{eq-C}
\C = \frac{\rho}{\theta}\partial_{\rho} \theta + \partial_e \theta.
\end{equation}

\subsubsection{Navier-Stokes equation}
\label{subsubsec:ns}
Navier-Stokes equations are obtained by using the higher order expansion $F =
M[F]+ \Kn \, G $. Introducing this expansion
in~(\ref{eq-SigmaF}) and~(\ref{eq-qF}) gives
\begin{equation*}
  \Sigma(F) = p \Id + \Kn\, \Sigma(G), \quad \text{ and } \quad
q(F) = \Kn \, q(G).
\end{equation*}
Then we have to approximate $\Sigma(G)$ and $q(G)$ up to
$O(\Kn)$. This is done by using the expansion of $F$
and~(\ref{eq-BGKpolydelta_adim}) to get
\begin{equation*}
  G = -\tau(\dt M[F] + v \cdot \nabla_x M[F]) + O(\Kn).
\end{equation*}
This gives the following approximations
\begin{equation} \label{eq-Sigmaq_M} 
\begin{split}
&   \Sigma(G)  = -\tau \cint{\cint{(v-u)\otimes (v-u)(\dt M[F] + v \cdot \nabla_x M[F])}} +
O(\Kn) , \\
&   q(G)  = -\tau \cint{\cint{(\demi |v-u|^2+\eps)(v-u)(\dt M[F] + v \cdot \nabla_x M[F])}} + O(\Kn).
\end{split}
\end{equation}

Now, we have to make some long computations to reduce
these expressions to those given in~(\ref{eq-sigmaq}). We start with
the stress tensor $\Sigma(G)$. First, note that the Maxwellian $M[F]$ given
by~(\ref{eq-Maxwchimie_adim}) can be separated into $M[F] = M_{tr}[F]M_{int}[F]$, with
\begin{equation*}
 M_{tr}[F] = \frac{\rho}{(2\pi \theta)^{\frac{3}{2}}}
\exp\left( - \frac{|v-u|^2}{2\theta}  \right) 
, \quad \text{ and } 
\quad M_{int}[F] = \Lambda(\delta)\left(\frac{\eps}{\theta}\right)^{\frac{\delta}{2}-1} 
\frac{1}{\theta}\exp\left( -\frac{\eps}{\theta} \right).
\end{equation*}
It is useful to introduce the notations $\cint{\phi}_v =
\int_{\R^3}\phi(v)\, dv$ and $\cint{\psi}_{\eps} =
\int_0^{+\infty}\psi(\eps)\, d\eps$ for any velocity (resp. energy)
dependent function $\phi$ (resp. $\psi$).
Then it can easily be seen that
\begin{equation*}
\begin{split}
\cint{M_{int}[F]}_{\eps} = 1, \quad  \cint{\dt M_{int}[F] }_{\eps} =
0,
 \quad \cint{\nabla_x M_{int}[F] }_{\eps} = 0.
\end{split}
\end{equation*}
This implies that $\Sigma(G)$ reduces to 
\begin{equation} \label{eq-SigmaGMtr} 
   \Sigma(G)  = -\tau \cint{(v-u)\otimes (v-u)(\dt M_{tr}[F] + v \cdot \nabla_x M_{tr}[F])}_v +
O(\Kn).  
\end{equation}

Now it is standard to write $\dt M_{tr}[F]$ and $\nabla_x M_{tr}[F]$ as
functions of derivatives of $\rho$, $u$, and $\theta$, and then to use
Euler equations~(\ref{eq-cons-theta}) to write time derivatives as
functions of the space derivatives only. After some algebra, we get
\begin{equation*}
  \dt M_{tr}[F] + v \cdot \nabla_x M_{tr}[F] 
= \frac{\rho}{\theta^{\frac{3}{2}}}M_0(V)\left( A(V) \cdot
  \frac{\nabla_x \theta}{\sqrt{\theta}} 
+ B(V):\nabla_x u \right) + O(\Kn),
\end{equation*}
where
\begin{equation*}
\begin{split}
& V = \frac{v-u}{\sqrt{\theta}}, \qquad M_0(V) =
\frac{1}{(2\pi)^{\frac32}}\exp(-\frac{|V|^2}{2}) , \\
& A(V) = \left(\frac{|V|^2}{2}- \frac52\right)V, \qquad 
 B(V) = 
V\otimes V 
- \left( 
          \left( \frac{|V|^2}{2}- \frac32\right) \C + 1
  \right) \Id.
\end{split}
\end{equation*}
Then, we introduce the previous relations in~(\ref{eq-SigmaGMtr}) to
get
\begin{equation*}
  \Sigma_{ij}(G) = -\tau \rho \theta \cint{V_iV_j B(V)M_0}_V 
  \nabla_{x_j} u_i + O(\Kn),
\end{equation*}
where we have used the change of variables $v\mapsto V$ in the
integral (the term with $A(V)$ vanishes due to the parity of
$M_0$). Then standard Gaussian integrals (see appendix~\ref{app:gaussian}) give
\begin{equation*}
  \Sigma(G) = - 
\tau \rho \theta \left(\nabla u + (\nabla u)^T
      -\C \nabla \cdot u \, Id\right) + O(\Kn), 
\end{equation*}
which is the announced result, in a non-dimensional form.

For the heat flux, we use the same technique to reduce $q(G)$ as given
in~(\ref{eq-Sigmaq_M}) to 
\begin{equation*}
\begin{split}
q_i(G) & = 
-\tau \cint{(\demi |v-u|^2)(v_i-u_i)(\dt M_{tr}[F] + v_j \partial_{x_j}
  M_{tr}[F])}_v \cint{M_{int}[F]}_{\eps} \\
& \quad -\tau \cint{(v_i-u_i)(\dt M_{tr}[F] + v_j \partial_{x_j}
  M_{tr}[F])}_v \cint{\eps M_{int}[F]}_{\eps} \\
& \quad  - \tau \cint{(v_i-u_i)M_{tr}[F] v_j}_v \cint{\eps \partial_{x_j} M_{int}}_{\eps} \\
& = - \tau \cint{\demi |V|^2V_i  A_{j}M_0}_V\partial_{x_j}\theta
- \tau \cint{V_i  A_{j}M_0}_V\frac{\delta}{2}\partial_{x_j}\theta
-\tau \rho \theta\cint{V_iV_jM_0}_V \partial_{x_j} (\frac{\delta}{2}\theta),
\end{split}
\end{equation*}
where we have used the relation $\cint{\eps M_{int}[F]}_{\eps} = \frac{\delta}{2}\theta$.
Using again Gaussian integrals, we get
\begin{equation*}
q(G) = -\tau \rho \theta \nabla h + O(\Kn),
\end{equation*}
where $h=\frac{5+\delta}{2}\theta$ is indeed the enthalpy, since
definitions~(\ref{eq-theta_adim}) and~(\ref{eq-deltarhoe_adim}) imply
$h = e + p/\rho$.

To summarize, we have shown that the stress tensor and heat flux in
conservation laws~(\ref{eq-cons}) are
\begin{equation*}
\begin{split}
& \Sigma(F) = p \Id - \Kn \tau \rho \theta \left(\nabla u + (\nabla u)^T
      -\C \nabla \cdot u \, Id\right) + O(\Kn^2)\\
& q(F) = -\Kn  \tau \rho \theta \nabla h + O(\Kn^2).
\end{split}
\end{equation*}
Now, we can go back to the dimensional variables, and we find
\begin{equation*}
\begin{split}
& \Sigma(F) = p(\rho,e) \Id - \mu(T(\rho,e)) \left(\nabla u + (\nabla u)^T
      -\C \nabla \cdot u \, Id\right) + O(\Kn^2)\\
& q(F) = - \mu(T(\rho,e)) \nabla h(\rho,e) + O(\Kn^2),
\end{split}
\end{equation*}
where $h(\rho,e) = e + \frac{p(\rho,e)}{\rho}$ is the enthalpy, and $\C = \frac{\rho^2}{p(\rho,e)} \partial_{\rho}(p(\rho,e)/\rho)
+ \partial_e(p(\rho,e)/\rho)$. This concludes the proof of the result
given at the beginning of this section.

%=========
   \subsection{Entropy}
%   \label{subsec:}
%=========
Here, we prove that our model~\eqref{eq-BGK_EOS} satisfies a
local entropy dissipation property. 
\begin{proposition}
 Let~$F$ be the solution of
 equation~\eqref{eq-BGK_EOS}--\eqref{eq-Maxwchimie}. Then the following inequality is satisfied:
\begin{equation}
 \cint{\cint{ (M[F]-F) \ln 
\left( \frac{2}{\delta} \eps^{1-\frac \delta 2 } F\right) }} \leq 0 \, .
\label{eq:second_principe}
\end{equation}
\end{proposition}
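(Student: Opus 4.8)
The plan is to recognize~\eqref{eq:second_principe} as the classical BGK entropy-dissipation (H-theorem) inequality, the only non-routine point being the identification of the correct ``polyatomic'' entropy density: here it is $\ln(\frac2\delta\eps^{1-\frac\delta2}F)$ rather than $\ln F$, because of the $\eps$-weight carried by the internal part of the Maxwellian~\eqref{eq-Maxwchimie}. Concretely I would (i) show that $\ln(\frac2\delta\eps^{1-\frac\delta2}M[F])$ is, for fixed $t,x$, an affine combination of the collision invariants $1$ and $\demi|v-u|^2+\eps$; (ii) subtract from the left-hand side of~\eqref{eq:second_principe} the integral of $(M[F]-F)$ against this invariant, which vanishes; and (iii) conclude by monotonicity of the logarithm.

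For (i), substituting $M[F]$ from~\eqref{eq-Maxwchimie}, the weight $\frac2\delta\eps^{1-\frac\delta2}$ is chosen exactly so as to cancel the internal-energy prefactor $\Lambda(\delta)(\eps/\theta)^{\frac\delta2-1}\theta^{-1}$, which leaves
\begin{equation*}
\frac2\delta\,\eps^{1-\frac\delta2}\,M[F]=\frac2\delta\,\Lambda(\delta)\,\frac{\rho}{(2\pi\theta)^{3/2}}\,\theta^{-\frac\delta2}\exp\!\Big(-\frac{|v-u|^2}{2\theta}-\frac{\eps}{\theta}\Big),
\end{equation*}
and hence
\begin{equation*}
\ln\!\Big(\frac2\delta\,\eps^{1-\frac\delta2}\,M[F]\Big)=\ln\!\Big(\frac2\delta\,\Lambda(\delta)\,\rho\,(2\pi\theta)^{-3/2}\,\theta^{-\frac\delta2}\Big)-\frac1\theta\Big(\demi|v-u|^2+\eps\Big).
\end{equation*}
The first term depends on $(t,x)$ only through $\rho$ and $\theta(\rho,e),\delta(\rho,e)$, and — crucially — the $\ln\eps$ contribution has disappeared. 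Since $M[F]$ reproduces the first five moments of $F$ (its defining property), $\cint{\cint{M[F]-F}}=0$ and $\cint{\cint{(\demi|v-u|^2+\eps)(M[F]-F)}}=0$, so step (ii) gives
\begin{equation*}
\cint{\cint{(M[F]-F)\,\ln\!\Big(\frac2\delta\,\eps^{1-\frac\delta2}\,M[F]\Big)}}=0 .
\end{equation*}

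Subtracting this vanishing quantity from the left-hand side of~\eqref{eq:second_principe}, the weights cancel inside the difference of logarithms and I am left with
\begin{equation*}
\cint{\cint{(M[F]-F)\,\ln\!\Big(\frac2\delta\,\eps^{1-\frac\delta2}\,F\Big)}}=\cint{\cint{(M[F]-F)\ln\frac{F}{M[F]}}}=-\cint{\cint{(F-M[F])\big(\ln F-\ln M[F]\big)}} .
\end{equation*}
Because $t\mapsto\ln t$ is increasing, $(F-M[F])(\ln F-\ln M[F])\ge0$ everywhere, so the right-hand side is $\le0$ and~\eqref{eq:second_principe} follows. The only delicate point — not really an obstacle — is to ensure all integrals make sense: positivity of $F$, enough decay in $v$ and $\eps$, and integrability of $\eps^{1-\frac\delta2}\ln(\cdots)$ near $\eps=0$; as usual for such a formal entropy estimate I would simply assume this. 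All the genuine content lies in the cancellation of step (i), which turns $\ln(\frac2\delta\eps^{1-\frac\delta2}M[F])$ into a collision invariant.
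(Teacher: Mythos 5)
Your proof is correct and follows essentially the same route as the paper's: both split off the term $\cint{\cint{(M[F]-F)\ln(\frac{2}{\delta}\eps^{1-\frac{\delta}{2}}M[F])}}$, show it vanishes because the weighted logarithm of the Maxwellian is an affine combination of the collision invariants $1$ and $\demi|v-u|^2+\eps$ (with the same constant $c(\delta,\rho,\theta)=\frac{2}{\delta}\frac{\rho\Lambda(\delta)}{(2\pi\theta)^{3/2}\theta^{\delta/2}}$ you compute), and conclude from the monotonicity of the logarithm applied to $(M[F]-F)\ln(F/M[F])\leq 0$. Your added remarks on integrability are a reasonable extra precaution for what is, in the paper as well, a formal estimate.
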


\begin{proof}
The left-hand side can be decomposed into
$$
 \cint{\cint{ (M[F]-F) \ln 
\left( \frac{2}{\delta} \eps^{1-\frac \delta 2 } F\right) }}
%\left( \frac{F}{\frac \delta 2 \eps^{\frac \delta 2 - 1}} \right) }}
 = \cint{\cint{  (M[F]-F) \ln \left( \frac F{M[F]} \right) }}
 +\cint{\cint{  (M[F]-F) \ln 
\left( \frac{2}{\delta} \eps^{1-\frac \delta 2 } M[F]\right)  }} \, .\\$$
%\left( \frac{M[F]}{\frac \delta 2   \eps ^{\frac \delta 2 - 1}} \right) }} \, .\\$$ 
The first term in the right-hand side is non-positive because the logarithm is a non-decreasing function.
The second term vanishes since $M[F]$ and $F$ have the same first 5
moments:
\begin{align*}
\cint{\cint{ (M[F]-F) \ln 
\left( \frac{2}{\delta} \eps^{1-\frac \delta 2 } M[F]\right) }}
%\left( \frac{M[F]}{\frac \delta 2 \eps^{\frac \delta 2 - 1}} \right) }}
         &= \cint{\cint{ (M[F]-F) }} \ln ( c(\delta,\rho,\theta)) \\
& \quad - \frac 1\theta \cint{\cint{ (M[F]-F) \left( \frac{|v-u|^2}{2} +
              \eps \right) }}\\
         &=0,
\end{align*}
with~$c(\delta,\rho,\theta) = \frac 2 \delta \frac{\rho \Lambda(\delta)}{\sqrt{2\pi \theta}^3\theta^{\delta/2}}$, which does not depend on~$v$ nor on~$\eps$.
\end{proof}

\begin{remark}
This result does not imply the dissipation of a
global entropy, except, for example, if~$\delta$ is constant. 
In such a case, we can define the entropy $H(f) = \cint{\cint{h(F)}}$,
where $h(F) = F\ln
\left(\frac{2}{\delta}\eps^{1-\frac{\delta}{2}}F\right) - F$, and we have
\begin{equation*}
\begin{split}
\dt H(F) + \nabla \cdot \cint{\cint{v h(F)}} & = \cint{\cint{\dt h(F) +
    v \cdot \nabla_x h(F)}} \\
& = \cint{\cint{ (\dt F + v \cdot \nabla_x F)h'(F)}} \\
& = \frac{1}{\tau}\cint{\cint{(M[F]-F)h'(F)}} \leq 0,
\end{split}
\end{equation*}
from~(\ref{eq:second_principe}), since $h'(F) = \ln
    \left(\frac{2}{\delta}\eps^{1-\frac{\delta}{2}}F\right)$.

In the general case, $\delta$ depends on $t$ and $x$: therefore, the relation
$\dt h(F) = \ln
    \left(\frac{2}{\delta}\eps^{1-\frac{\delta}{2}}F\right) \dt F  $
    is not correct. Consequently, the local property~(\ref{eq:second_principe}) cannot be used. It is not
clear so far that our model satisfies a global dissipation
property. This problem was also noticed in~\cite{KKA_2019}.
\end{remark}

%=========
   \subsection{Reduced model}
%   \label{subsec:}
%=========

For computational reasons, it is interesting to reduce the complexity
of model~(\ref{eq-BGK_EOS}) by using the usual reduced
distribution technique~\cite{HH_1968}. We define
reduced distributions 
$f(t,x,v) = \int_0^{+\infty}F(t,x,v,\eps)\, d\eps$ and 
$g(t,x,v) = \int_0^{+\infty}\eps F(t,x,v,\eps)\, d\eps$, and by
integration of~(\ref{eq-BGK_EOS}) w.r.t $\eps$, we can
easily obtain the following closed system of two BGK equations
\begin{equation}\label{eq-reduced}
\begin{split} 
&   \dt f + v\cdot \nabla_x f = \frac{1}{\tau}(M[f,g]-f), \\
&   \dt g + v\cdot \nabla_x g 
  = \frac{1}{\tau}( \frac{\delta}{2}
                            RT M[f,g]-g),
\end{split}
\end{equation}
where $M[f,g]$ is the translational part of $M[F]$ defined by
\begin{equation*}
   M[f,g] = \frac{\rho}{(2\pi RT)^{\frac{3}{2}}}
\exp\left( - \frac{|v-u|^2}{2 RT}  \right) ,
\end{equation*}
and the macroscopic quantities are defined by
\begin{equation} \label{eq-mtsfg} 
 \rho(t,x)  = \int_{\R^3} f \, dv, \qquad 
\rho u (t,x)  = \int_{\R^3} v f \, dv, \qquad 
\rho e (t,x)  = \int_{\R^3}( \demi |v-u|^2 f + g) \, dv,  
\end{equation} 
while $\delta$, $R$ and $\tau$ are still defined
by~(\ref{eq-deltarhoe}),~\eqref{eq-Rrhoe} and~(\ref{eq-tauchimie}). This reduced system
is equivalent to~(\ref{eq-BGK_EOS}), that is to say $F$ and $(f,g)$ have
the same moments. Moreover, the compressible Navier-Stokes asymptotics
obtained in section~\ref{subsec:CE} can also be derived from this
reduced system. 
% \bl  {We also believe that the reduced model behaves
%   better that the non reduced model because it somehow does not depend
%   on the form of the distribution in internal energy, which is
%   arbitrary. Each distribution accounts for all the components of the
%   mixture: $f$ for translational effects of all gases and $g$ for
%   other energy levels of all gases.}
Consequently, \correct{this system is the one we use for our numerical
tests presented in the following section}. 

%--------------------------------------------------------------------------
\section{Numerical results}
%--------------------------------------------------------------------------
\label{sec:num}

%=========
   \subsection{Moderate temperature flow: vibrating molecules}
%   \label{subsec:}
%=========

A numerical scheme for model~(\ref{eq-reduced}) has been implemented
in the code of CEA-CESTA. {This code  is a deterministic code based
  on the works presented in~\cite{mieussens99,bchm2013} which solves the BGK
  equation in 3 dimensions of space and 3 dimensions in velocity with
  a second order finite volume scheme on structured meshes}. It is remarkable that the original code (for
non reacting gases, with no high temperature effects), presented
in~\cite{bchm2013}, can be very easily adapted to this new model. Only a
few modifications are necessary.

The goal of this section is to illustrate the capacity of our model
to account for some high temperature gas effects. We only consider the case
of a mixture of two vibrating, but non reacting, gases. A validation
of our model for reacting gases will be given in a further work.

%dire seulement illustration avec delta(T), montre l'importance de
%tenir compte des vibrations

Our test is a 2D hypersonic plane flow of air--considered as a mixture
of two vibrating gases, nitrogen and dioxygen--over a quarter of a
cylinder which is supposed to be isothermal (see
figure~\ref{amont}). Gas-solid wall interactions are modeled by the usual
diffuse reflection.
At the inlet,  the flow is defined by the data given in table~\ref{table:ci}.
\begin{table}[h]
  \centering

\begin{tabular}{|l|l|}
\hline
Mass concentration of $N_2$ ($c_{N_2}$) & $0.75$\\
\hline
Mass concentration of $O_2$ ($c_{O_2}$) & $0.25$\\
\hline
Mach number of the mixture & $10$\\
\hline
Velocity of the mixture & $2267m.s^{-1}$\\
\hline
Density of the mixture &  $3.059\times 10^{-4} kg.m^{-3}$\\
\hline
Pressure of the mixture& $11.22 Pa$\\
\hline
Temperature of the mixture& $127.6 K$\\
\hline
Temperature of the cylinder& $293K$\\
\hline
Radius of the cylinder& $0.1m$\\
\hline
\end{tabular}

  \caption{Hypersonic flow around a cylinder: initial data}
  \label{table:ci}
\end{table}

In this case, the vibrational energy is taken into account as
described in section~\ref{subsubsec:example2}. The corresponding
constitutive relations are obtained as explained in
remark~\ref{rem:compat_modeles}.

The flow conditions are such that molecules vibrate, but
no chemical reactions are active (temperatures go up to $3000$K
whereas chemical reactions occur at $5000$K at pressure $P=1$atm): our
thermodynamical approach is reasonable. Since the test case is dense
enough (the Knudsen number is around $0.01$) we can compare the new model
with a Navier-Stokes code (a 2D finite
volume code with structured meshes), in which are enforced the same viscosity and conductivity as
in compressible Navier-Stokes asymptotics derived from the BGK model
(see section~\ref{subsec:CE}). To validate the new model we have made
four different simulations:
\begin{itemize}
 \item a Navier-Stokes simulation without taking into account vibrations (called $NS1$),
\item a Navier-Stokes simulation  that takes into account vibrations (called $NS2$),
 \item a BGK simulation without taking into account vibrations (called $BGK1$),
\item a BGK simulation that takes into account vibrations (called $BGK2$).
\end{itemize}

The first comparison is between $NS1$ and $BGK1$, in order to show
that the two model are consistent in this dense regimes, when there
are no vibration energy. As it can be seen in figure \ref{NSBGK1}, the results agree very well.

The second comparison is between $NS2$ and $BGK2$ to show we still
have a good agreement when vibrations are taken into account. This is
what we observe in figure \ref{NSBGK2}. One can also observe that, due to
vibrations, the temperature decreased from $2682K$ to $2358K$ for
Navier-Stokes and from $2695K$ to $2365K$ for BGK.

The last comparison is to show the influence of vibrational energy on
the results. We compare $BGK1$ and $BGK2$, and we observe that the shock is not
 at the same position. Since there is a transfer of energy from
 translational and rotational modes to vibrational modes, the maximum
 temperature is lower and the shock is slightly close to the cylinder
 with BGK2 (see figure~\ref{bgkbgk2}). We clearly see this difference
 with the temperature profile along the stagnation line, see figure~\ref{stagnation}.

{To conclude this section, it can be said that when Navier-Stokes
  and BGK are set with the same viscosity and Prandtl number, results
  agree very well: but of  course for more realistic test cases when
  the Prandtl number is not equal to one, there will be a discrepancy
  in the results that might be  corrected with an ES-BGK extension of
  our model. This will be presented in a further work.}

\correctp{
%=========
   \subsection{High temperature flow: reacting gas}
   \label{subsec:result_chemical}
%=========
In this section, we illustrate the ability of our model to account for
chemical reactions in a high temperature flow. In order to simplify
the analysis of our results, we consider here a single species flow of
dioxygen. The geometry of the test case is the same as in the previous
section, and the parameters of the upstream flow are the followings: the Mach
number is $12$, the density is $10^{-3}$ kg.m$^{-3}$, so that
the flow is in the near continuous regime (Kn$=4.29\times 10^{-4}$),
the pressure is $33.15$ Pa, the temperature is $127.6$ K, and the
temperature of the cylinder is still $283$ K.

In this case, the chemical reactions are taken into account with
pressure and temperature laws as given by Hansen~\cite{hansen}, both
in our Navier-Stokes and BGK solvers. We obtain the comparison shown
in figure~\ref{NSchimie_BGKchimie} for the temperature field. The
results given by both codes are very close. A closer look at the
temperature profile along the stagnation line is also shown in
figure~\ref{T_profile}: this profile shows that BGK results are
excellent.

We are also able to obtain the concentration $c_O$ of monoatomic oxygen
(see section~\ref{subsec:chemical}), and this concentration is
plotted in figure~\ref{CO_NSchimie_BGKchimie}. Again both codes are in
very good agreement, and these results show that there is dissociation
of $O_2$ molecules in the largest temperature zones, since the concentration
rises up to 12\% there. 

Finally, the importance of chemical reactions (dissociation) in this
test case can be seen as follows. In figure~\ref{BGKvibra_BGKchimie},
we compare the previous BGK results to a simulation made when
vibrations are taken into account but the chemical reactions are
not. This figure clearly shows that the non reacting BGK results are
incorrect: the location of the shock is wrong, and the temperature is
too high.

}
%--------------------------------------------------------------------------
\section{Conclusion}
\label{sec:conclusion}
%--------------------------------------------------------------------------

In this paper, we have proposed several generalized BGK models to
account for high temperature effects (vibrations and chemical
reactions). The first model is able to account for the fact that, for
polyatomic gases, some internal degrees of freedom are partially
excited with a level of excitation that depends on the temperature.
In other words, we have derived a model for a polyatomic gas with a
non-constant specific heat~$c_p = c_p(T)$.

This model has been extended to take into account general
constitutive laws for pressure and temperature, like in equilibrium
chemically reacting gases in high temperature flows.  By using a
Chapman-Enskog analysis, we have derived compressible Navier-Stokes
equations from this model that are consistent with these constitutive
laws. This consistency has been illustrated on preliminary numerical
tests, in which the importance to take vibration modes into account is
clearly seen.

We point out that this new model can be reduced to a BGK system in which the
molecular velocity is the only kinetic variable. This makes it
possible to simulate a high temperature polyatomic gas for the
cost of a simple monoatomic rarefied gas flow simulation.

The model for chemically reacting gases \correct{has been
  tested with a single species flow that shows its ability to account
  for dissociation, at least in the near continuum regime. Our model
has still to be validated with comparisons to a full Boltzmann (DSMC)
solver in the rarefied regime. It should also be }extended to allow for various
different time scales (viscous versus thermal diffusion time scale,
translational versus rotational energy relaxation rates). This might
be possible with the same approach as the one used to derive the ES-BGK
model for polyatomic gases (see~\cite{esbgk_poly}).

\appendix

%--------------------------------------------------------------------------
\section{Gaussian integrals}
%--------------------------------------------------------------------------
\label{app:gaussian}

We remind the definition of the absolute Maxwellian $M_0(V) =
\frac{1}{(2\pi)^{\frac32}}\exp(-\frac{|V|^2}{2})$. It is standard to
derive the following integral relations (see~\cite{chapmancowling},
for instance), written with the Einstein notation:
\begin{align*}
&   \cint{M_0}_V = 1, \\
&   \cint{V_iV_jM_0}_V = \delta_{ij}, \qquad \cint{V_i^2M_0}_V = 1,
  \qquad \cint{|V|^2M_0}_V = 3, \\
& \cint{V_i^2V_j^2M_0}_V = 1 + 2\, \delta_{ij},  \qquad \cint{V_iV_jV_kV_lM_0}_V = \delta_{ij}\delta_{kl}  +
  \delta_{ik}\delta_{jl}  + \delta_{il}\delta_{jk}  \\
& \cint{V_iV_j|V|^2M_0}_V = 5 \,\delta_{ij},  \qquad \cint{|V|^4M_0}_V
  = 15, \\
& \cint{V_iV_j|V|^4M_0}_V = 35 \,\delta_{ij},  \qquad \cint{|V|^6M_0}_V = 105,
\end{align*}
while all the integrals of odd power of $V$ are zero.

From the previous Gaussian integrals, it can be shown that for any
$3\times 3$ matrix $C$, we have
\begin{equation*}
\cint{V_iV_jC_{kl}V_kV_lM_0}_V = C_{ij} + C_{ji} + C_{ii}\delta_{ij}.
\end{equation*}

 \bibliographystyle{plain}

 \bibliography{biblio}

%%%%%%%%%%%% figures

\begin{figure}[p]
\begin{center}
  \includegraphics[width=0.7\textwidth]{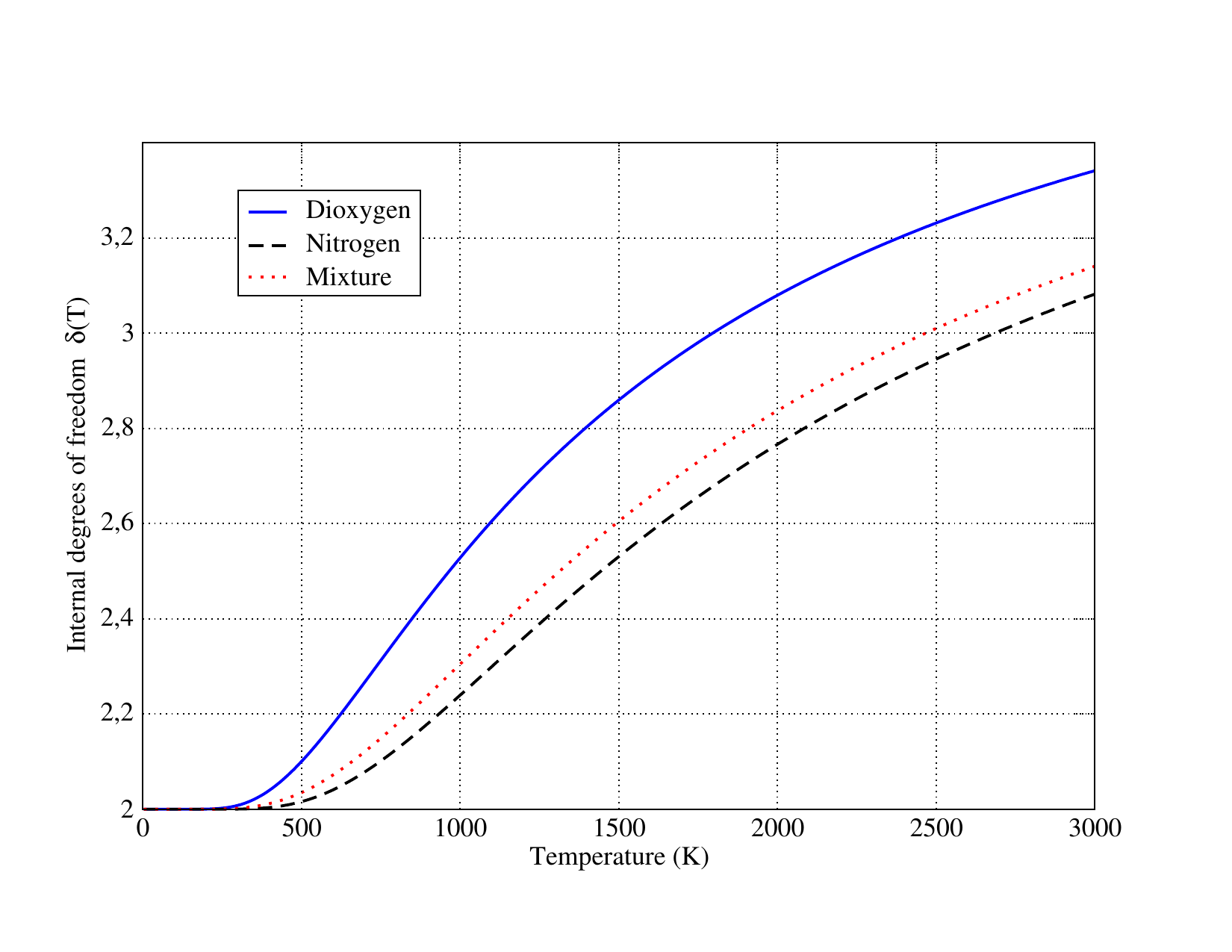}
\end{center}
\caption{Internal degrees of freedom as a function of the temperature}
\label{fig:delta}
\end{figure}

\clearpage

\begin{figure}[p]
\begin{center}
 \includegraphics[width=0.7\textwidth]{./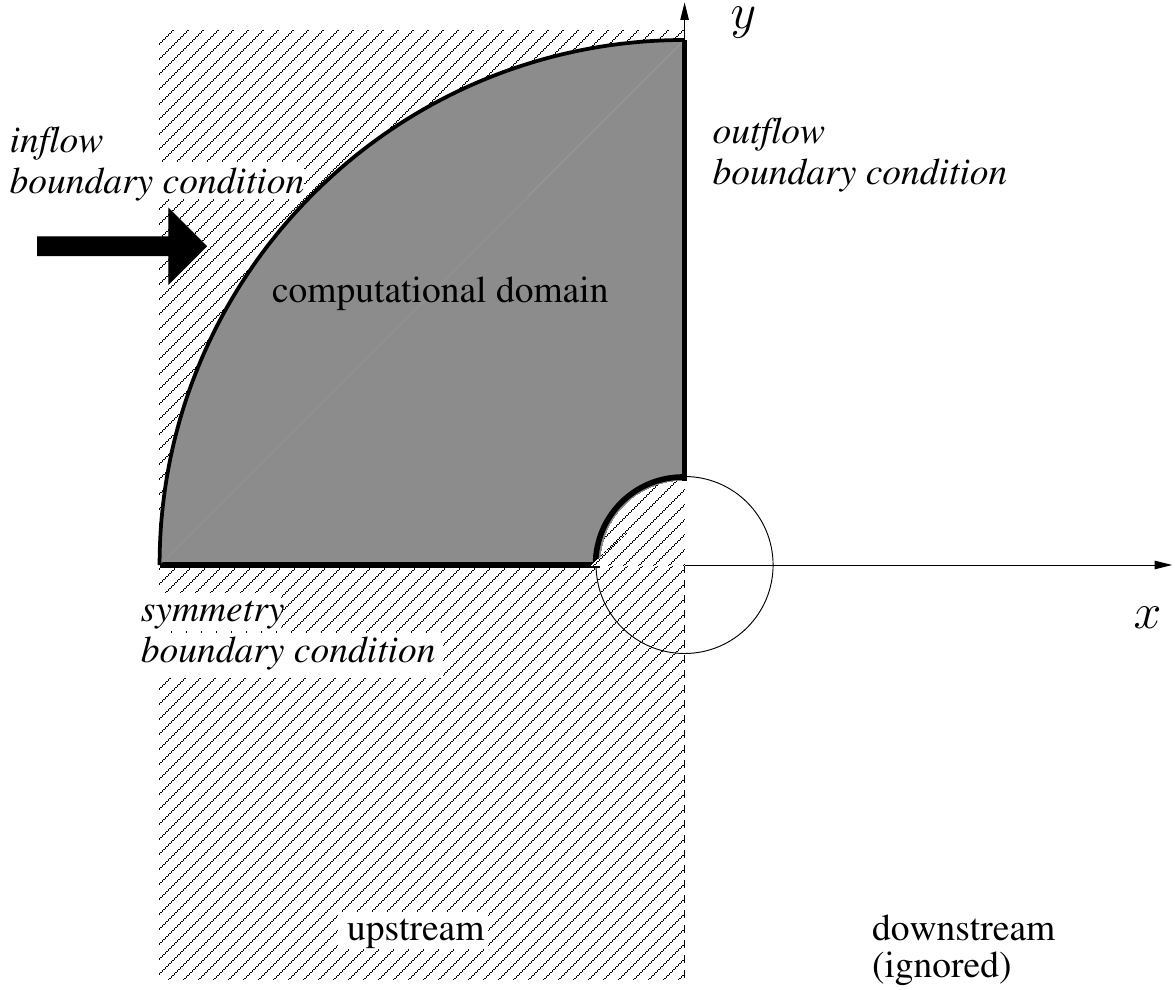}
 % castest.png: 1024x771 pixel, 72dpi, 36.12x27.20 cm, bb=0 0 1024 771
\end{center}
\caption{%Flow over a cylinder at Mach 10 (velocity field).
 \correct{Plane flow around a cylinder: geometry and computational domain. By
symmetry with respect to the horizontal axis,  the computational domain
is defined for the upper part only. The downstream flow is not simulated.}}
\label{amont}
\end{figure}

\clearpage

\begin{figure}[p]
 \centering
 \includegraphics[width=0.9\textwidth]{./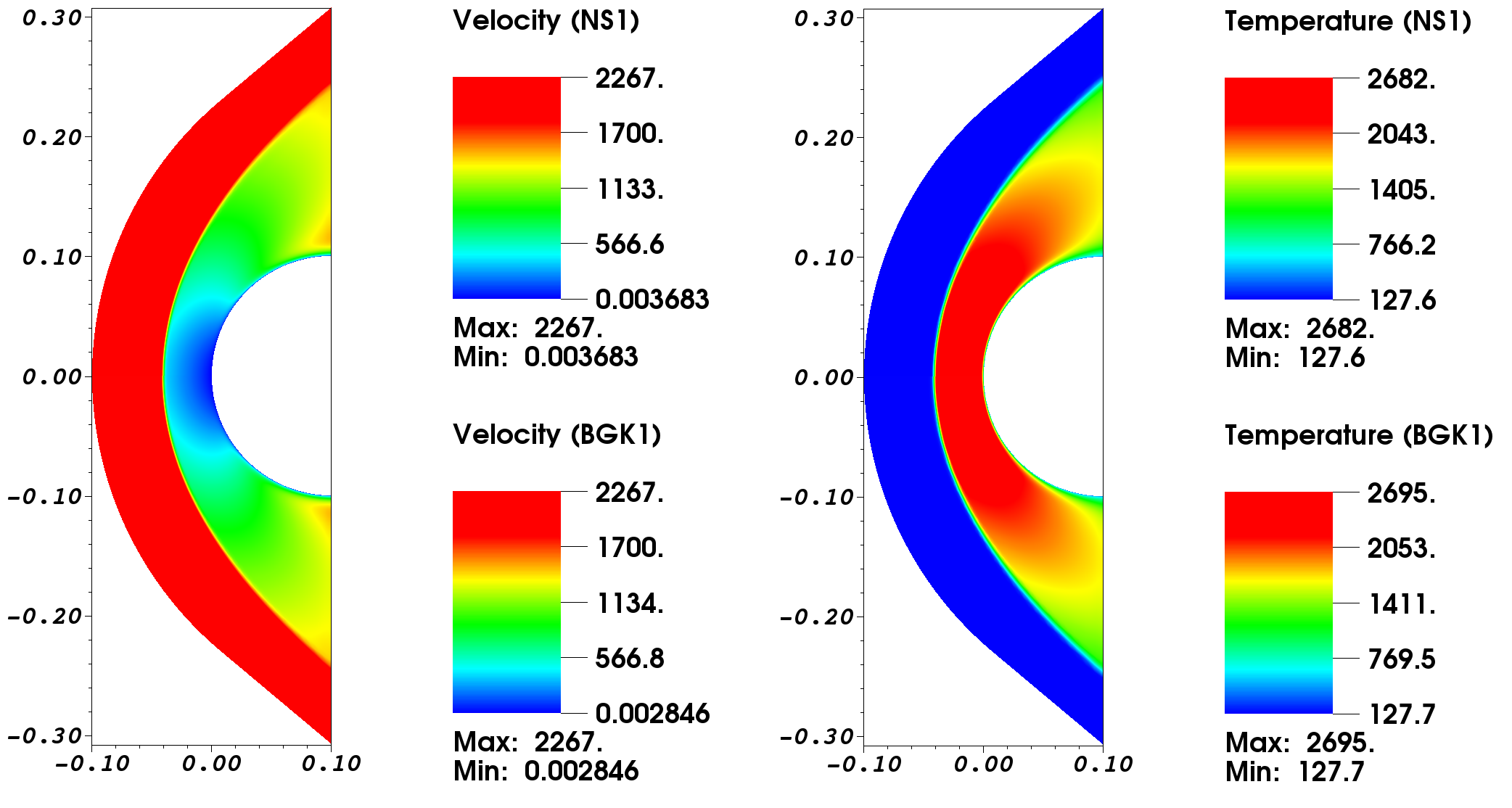}
 \caption{Non vibrating air: velocity and temperature fields (Top: NS1, bottom: BGK1)}
\label{NSBGK1}
\end{figure}

\clearpage

\begin{figure}[p]
 \centering
 \includegraphics[width=0.9\textwidth]{./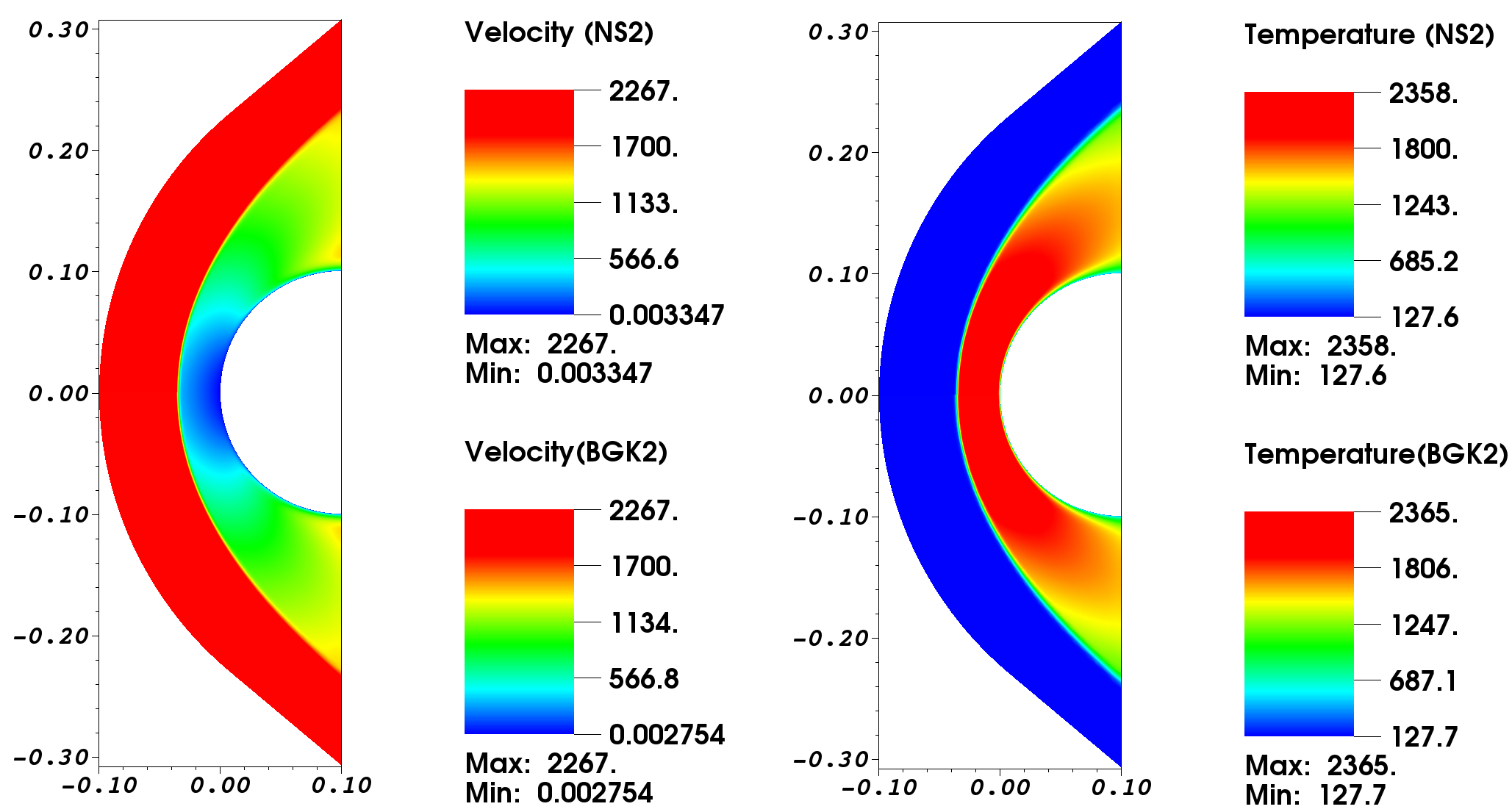}
 \caption{Vibrating air: velocity and temperature fields (Top: NS2, bottom: BGK2)}
\label{NSBGK2}
\end{figure}

\clearpage

\begin{figure}[p]
 \centering
 \includegraphics[width=0.9\textwidth]{./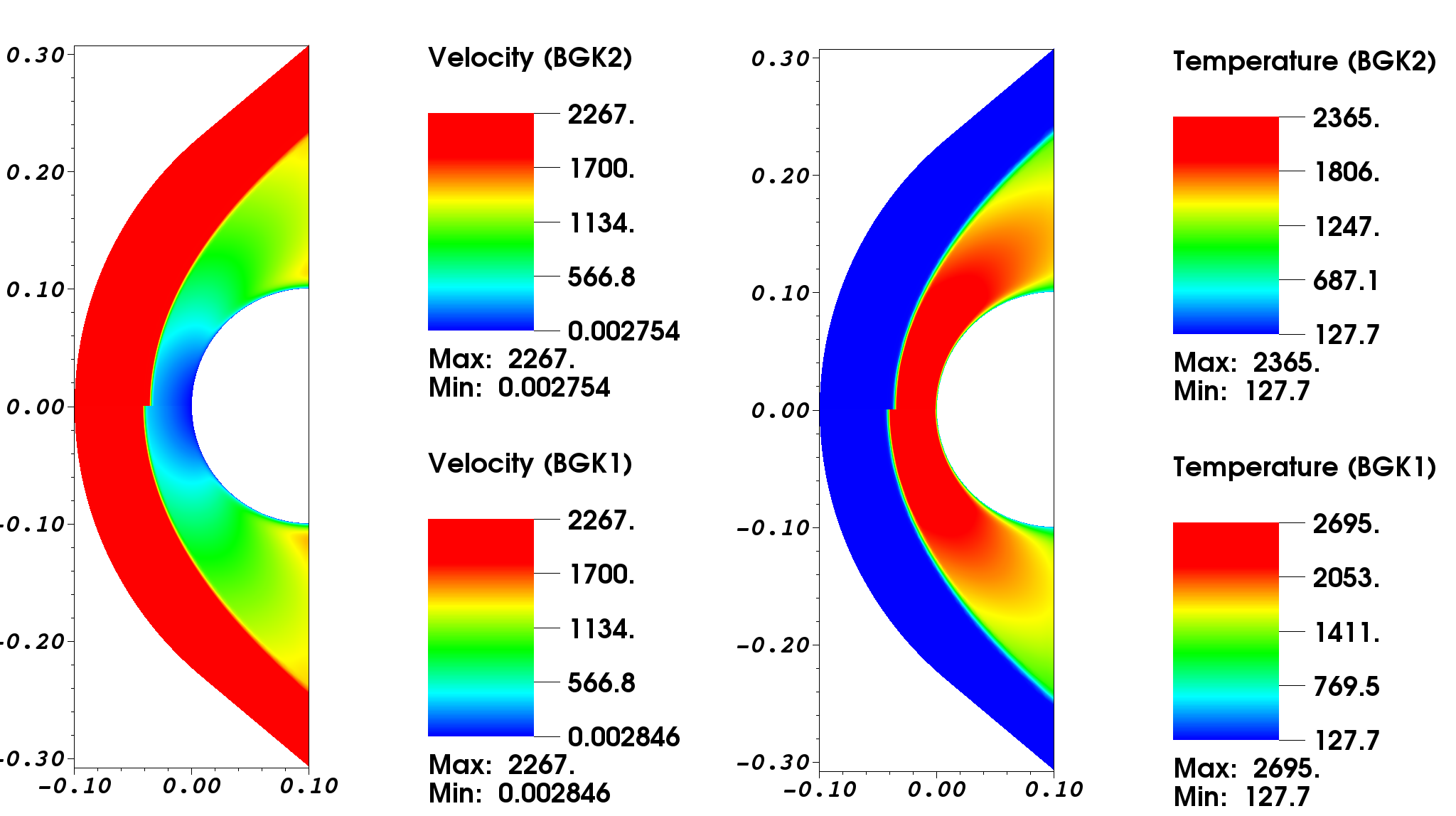}
\caption{Vibrating and non-vibrating air: velocity field and temperature field (Top: BGK2, bottom: BGK1)}
\label{bgkbgk2}
\end{figure}

\clearpage

\begin{figure}[p]
 \centering
 \includegraphics[width=0.9\textwidth]{./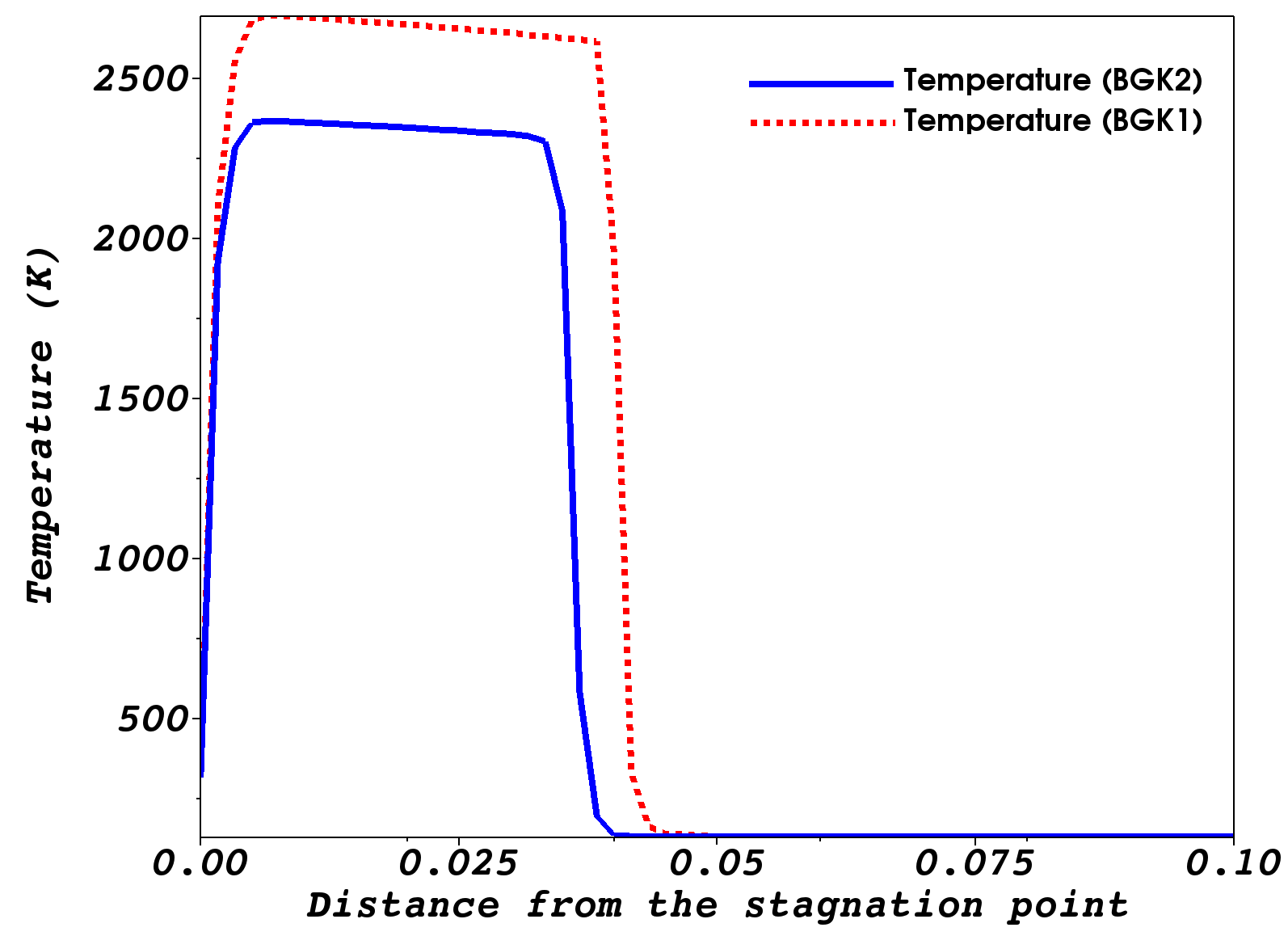}
 \caption{Vibrating and non-vibrating air: temperature profile along
   the stagnation line.}
\label{stagnation}
\end{figure}

%%%%%%%%%%%%%%%%%%%%%%%%%%%%%%%%%%%%%ù
\clearpage

\begin{figure}[p]
 \centering
 \includegraphics[width=0.5\textwidth]{./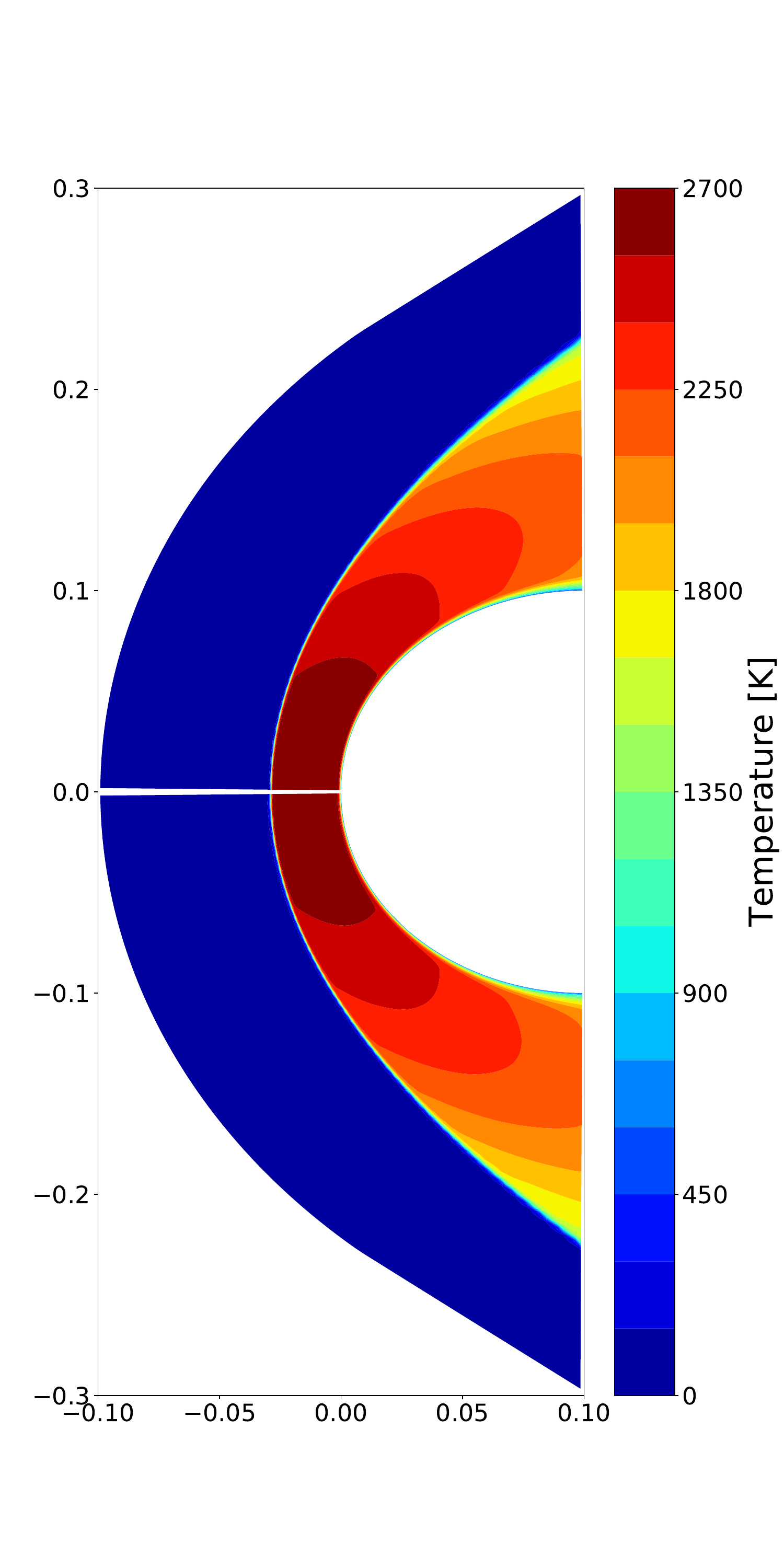}
 \caption{Dioxygen flow: temperature field obtained with a chemical
   equilibrium Navier-Stokes solver (top) and our chemical
   equilibrium BGK solver (bottom).}
\label{NSchimie_BGKchimie}
\end{figure}

\clearpage
\begin{figure}[p]
 \centering
 \includegraphics[width=0.9\textwidth]{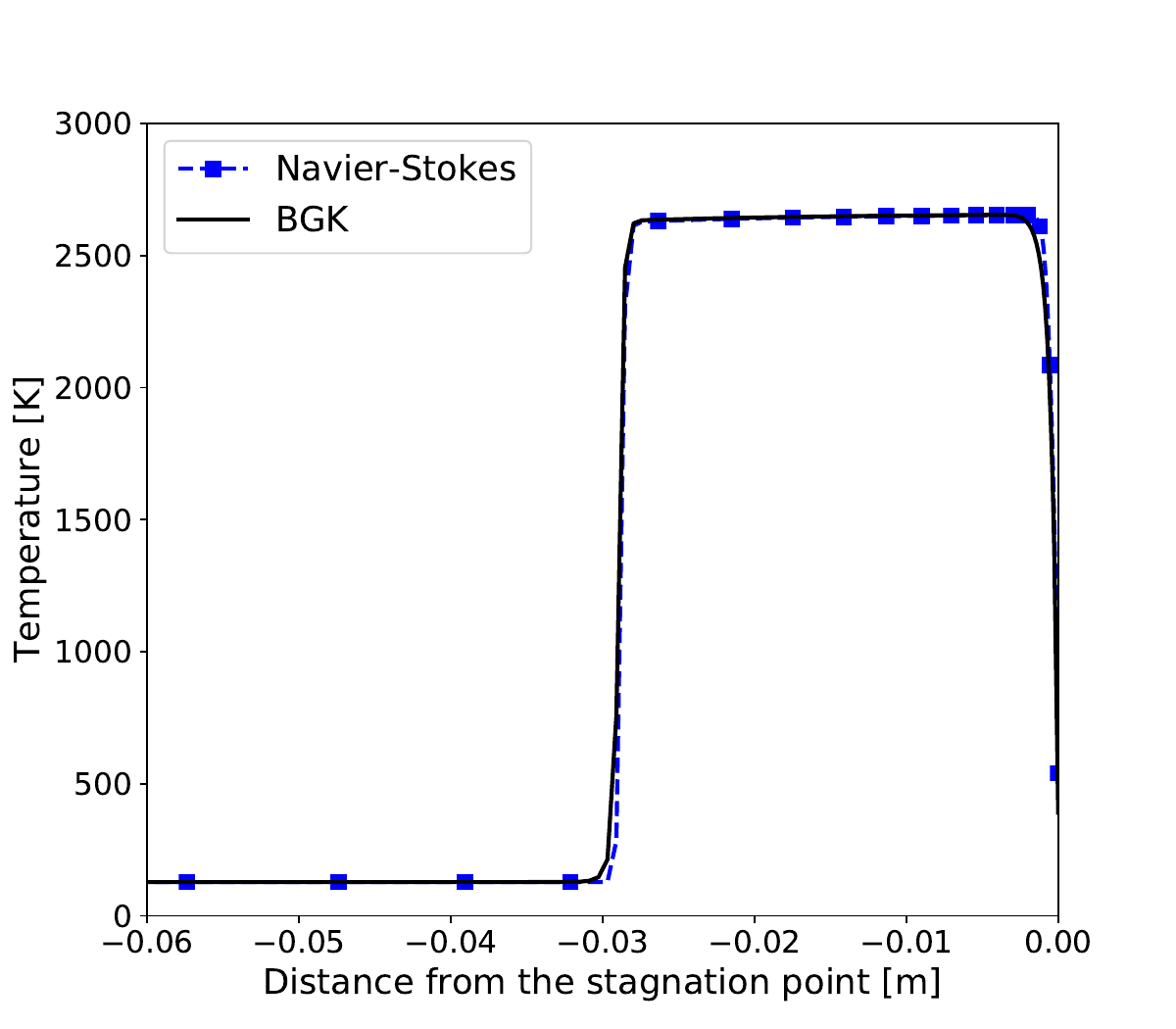}
 \caption{Dioxygen flow: temperature along the
   stagnation line.}
\label{T_profile}
\end{figure}

\clearpage
\begin{figure}[p]
 \centering
 \includegraphics[width=0.5\textwidth]{./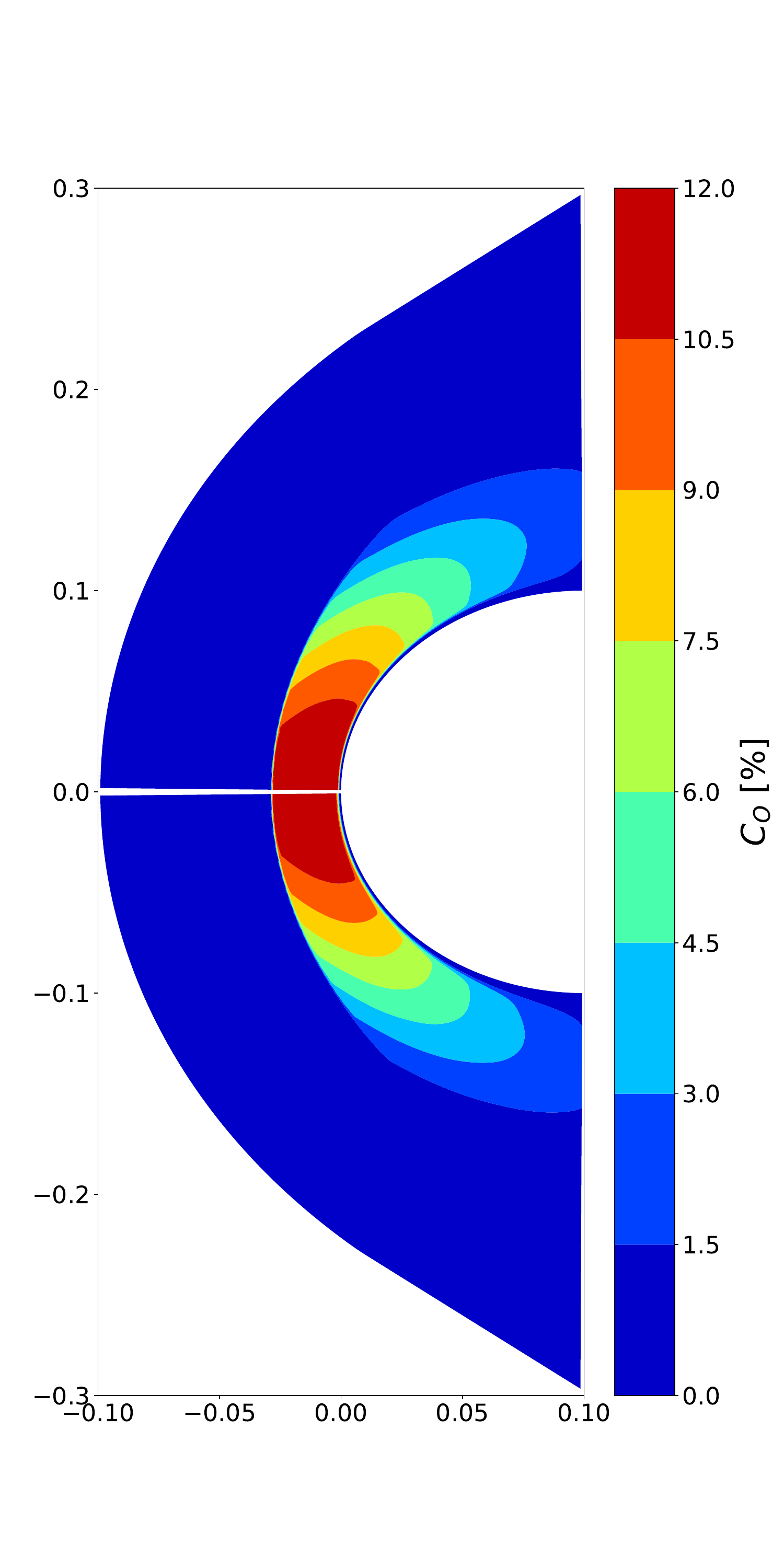}
\caption{Dioxygen flow: concentration of monoatomic oxygen obtained with a chemical
   equilibrium Navier-Stokes solver (top) and our chemical
   equilibrium BGK solver (bottom).}
\label{CO_NSchimie_BGKchimie}
\end{figure}

\clearpage
\begin{figure}[p]
 \centering
 \includegraphics[width=0.5\textwidth]{./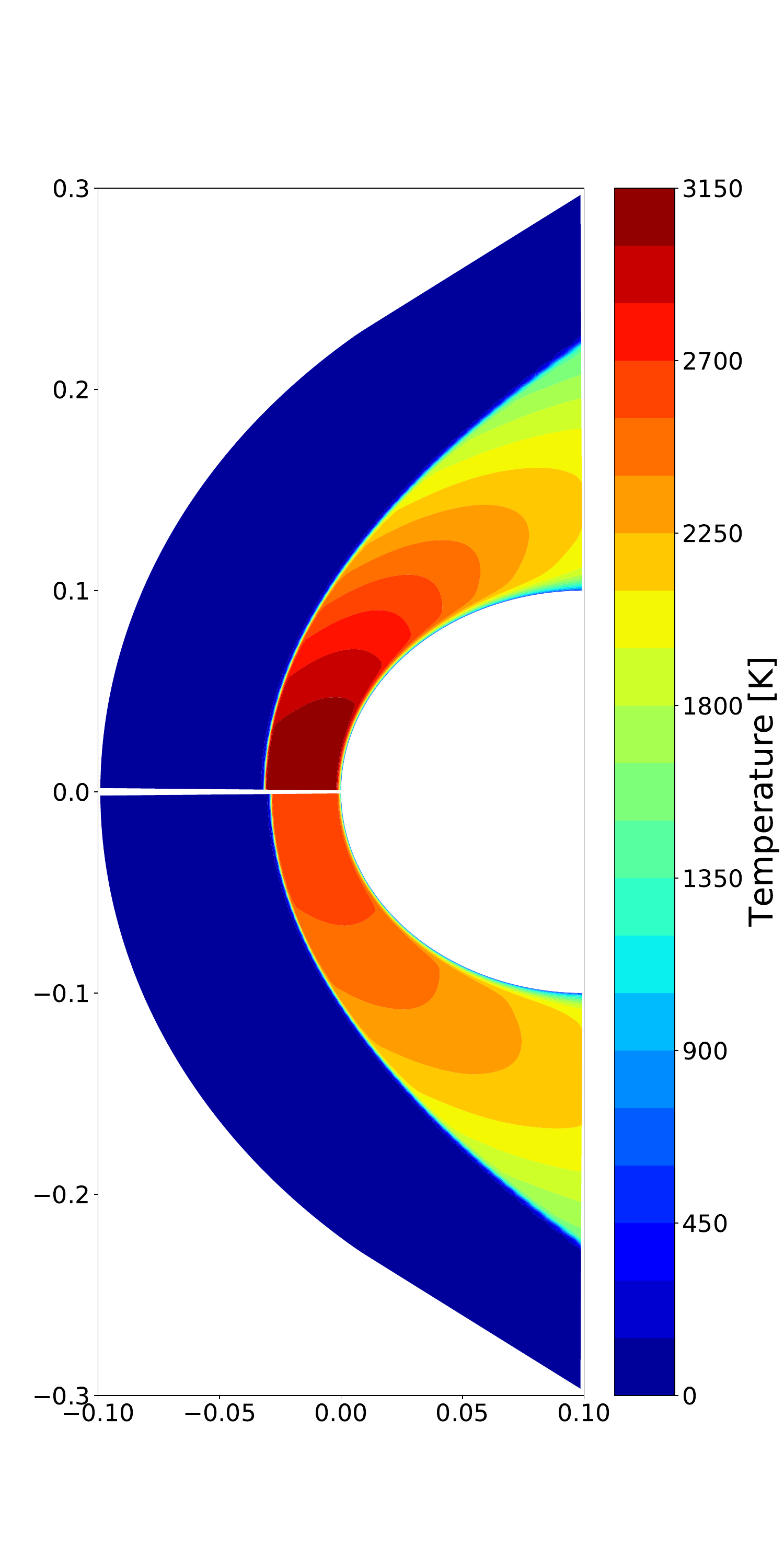}
 \caption{Dioxygen flow: temperature field obtained with our BGK
   solver with only
   vibrational energy (top), and our chemical
   equilibrium BGK solver (bottom).}
\label{BGKvibra_BGKchimie}
\end{figure}

\end{document}